\def\theequation{\thesection.\arabic{equation}}
\newcommand{\eqnum}{\refstepcounter{equation}\textup{\tagform@{\theequation}}}
\newcounter{copy}
\renewcommand{\thecopy}{\ifnum0=\c@section\arabic{copy}\else\thesection.\arabic{copy}'\fi}
\theoremstyle{definition}
\newtheorem{defn}[equation]{Definition}
\theoremstyle{plain}
\newtheorem{thm}[equation]{Theorem}
\newtheorem{prp}[equation]{Proposition}
\newtheorem{lem}[equation]{Lemma}
\theoremstyle{remark}
\newtheorem{rmk}[equation]{Remark}
\newcommand{\bB}{\mathbb{B}}
\newcommand{\bC}{\mathbb{C}}
\newcommand{\bD}{\mathbb{D}}
\newcommand{\bH}{\mathbb{H}}
\newcommand{\bK}{\mathbb{K}}
\newcommand{\bM}{\mathbb{M}}
\newcommand{\bN}{\mathbb{N}}
\newcommand{\bR}{\mathbb{R}}
\newcommand{\bT}{\mathbb{T}}
\newcommand{\bZ}{\mathbb{Z}}
\newcommand{\cD}{\mathcal{D}}
\newcommand{\cE}{\mathcal{E}}
\newcommand{\cG}{\mathcal{G}}
\newcommand{\cL}{\mathcal{L}}
\newcommand{\cP}{\mathcal{P}}
\newcommand{\cQ}{\mathcal{Q}}
\newcommand{\cS}{\mathcal{S}}
\newcommand{\cU}{\mathcal{U}}
\newcommand{\cW}{\mathcal{W}}
\newcommand{\fD}{\mathfrak{D}}
\newcommand{\fM}{\mathfrak{M}}
\newcommand{\bk}{\mathbf{k}}
\newcommand{\bv}{\mathbf{v}}
\newcommand{\bx}{\mathbf{x}}
\newcommand{\sD}{\mathscr{D}}
\newcommand{\sE}{\mathscr{E}}
\newcommand{\sF}{\mathscr{F}}
\newcommand{\sH}{\mathscr{H}}
\newcommand{\bfI}{\mathbf{I}}
\newcommand{\blank}{\text{\textvisiblespace}}
\newcommand{\id}{\mathrm{id}}
\newcommand{\Cl}{\mathit{C}\ell}
\newcommand{\bCl}{\mathbb{C}\ell}
\DeclareMathOperator{\hotimes}{\hat{\otimes }}
\DeclareMathOperator{\K}{\mathrm{K}}
\DeclareMathOperator{\KR}{\mathrm{KR}}
\DeclareMathOperator{\KO}{\mathrm{KO}}
\DeclareMathOperator{\End}{\mathrm{End}}
\DeclareMathOperator{\Hom}{\mathrm{Hom}}
\DeclareMathOperator{\Index}{\mathrm{Index}}
\DeclareMathOperator{\rank}{rank}
\newcommand{\lwedge}{{\textstyle\bigwedge}}
\author{Yosuke Kubota}
\address{Department of Mathematical Sciences, Shinshu University\\ 3-1-1 Asahi, Matsumoto, Nagano, 390-8621, Japan\\ and \\ iTHEMS Program, RIKEN\\ 2-1 Hirosawa, Wako, Saitama, 351-0198, Japan}
\email{ykubota@shinshu-u.ac.jp}
\title[The index theorem of lattice Wilson--Dirac operators]{The index theorem of lattice Wilson--Dirac operators via higher index theory}
\date{September 8, 2020}
\subjclass[2010]{Primary 19K56; Secondary 81T13, 46L80.}
\keywords{Wilson--Dirac operator, lattice gauge anomaly, operator K-theory, group quasi-representation.}
\begin{document}
\maketitle
\begin{abstract}
We give a proof of the index theorem of lattice Wilson--Dirac operators, which states that the index of a twisted Dirac operator on the standard torus is described in terms of the corresponding lattice Wilson--Dirac operator.
Our proof is based on the higher index theory of almost flat vector bundles.
\end{abstract}
\tableofcontents
\section{Introduction}\label{section:1}
Lattice gauge theory is a theoretical and numerical approach for studying the Yang--Mills gauge theory, especially the quantum chromodynamics, in which the (Wick-rotated Euclidean) spacetime is approximated by its discrete lattice. 
If we also assume that the spacetime is compact by imposing the periodic boundary condition, the lattice becomes a finite number of points. 
This approximation makes the path-integral finite-dimensional and hence calculable by e.g.~the Monte Carlo simulation. 

The main subject of this paper is the lattice approximation of the chiral anomaly in gauge theory. 
The chiral anomaly is the quantum breaking of the gauge symmetry, which the classical Yang--Mills theory possesses. 
According to Fujikawa's method, it is described by the Fredholm index of the Dirac operator twisted by a vector bundle.
Here a natural question arises: what is a lattice approximation of the chiral anomaly, or equivalently, the index of the twisted Dirac operator? 
A main difficulty for answering this question is that, although the Fredholm theory is essentially an infinite dimensional phenomenon, the approximated lattice Dirac operator acts on the finite dimensional Hilbert space of $\ell^2$-functions on the lattice. 

The answer is given in the literature of theoretical physics \cites{narayananConstructionLatticeChiral1995,hasenfratzIndexTheoremQCD1998} by considering the lattice (hermitian) Wilson--Dirac operator, a variation of the lattice Dirac operator defined by adding a new term called the Wilson term.
If the lattice scale is sufficiently fine, we obtain an invertible self-adjoint matrix by imposing a mass term to the hermitian Wilson-Dirac operator (this estimate is studied by Neuberger~\cite{neubergerBoundsWilsonDirac2000}). 
Then the half difference of the number of its positive and negative eigenvalues, instead of the number of the zero-modes, is a topological invariant. 

This integer actually coincides with the index of the Dirac operator at the continuum limit. 
This equality has been studied in the literature of theoretical physics \cites{luscherTopologyAxialAnomaly1999,suzukiSimpleEvaluationChiral1999,fujikawaContinuumLimitChiral1999,adamsAxialAnomalyTopological2002}. 
In this paper we shed new light on this index theorem from the viewpoint of higher index theory of almost flat vector bundles, i.e., a vector bundle equipped with a hermitian connection whose curvature is small. We mention that there are several other mathematical approaches to the same problem \cites{yamashitaLatticeVersionAtiyahSinger2020,fukayaAnalyticIndicesLattice}.

Another motivation of the paper is the study of almost commuting unitary matrices. Motivated from a question by Halmos~\cite{halmosUnsolvedProblemsUnknown1976}, Voiculescu~\cite{voiculescuAsymptoticallyCommutingFinite1983} constructs an example of a pair of unitary matrices $(U,V)$ such that the norm of the commutator $\| [U,V]\|$ is arbitrarily small but not perturbed to any commutative pair. 
Excel--Loring \cite{exelInvariantsAlmostCommuting1991} gives an alternative proof using a topological invariant motivated from C*-algebra K-theory. 
As a by-product of the proof of the lattice index theorem, we give an explicit formula of this invariant inspired from the hermitian Wilson--Dirac operator (Theorem \ref{thm:ACM}).

This topological invariant is related to the geometry of almost flat vector bundles by Connes--Gromov--Moscovici~\cite{connesConjectureNovikovFibres1990}.
A $d$-tuple of mutually almost commuting unitaries is thought of as a quasi-representation of the group $\bZ^d$. 
Generalizing the monodromy correspondence of flat bundles and representations, almost flat vector bundles on a space $M$ corresponds roughly in one-to-one with a quasi-representation of $\pi_1(M)$. 
It is clarified in the work of Hanke--Schick~\cite{hankeEnlargeabilityIndexTheory2006} and Dadarlat~\cite{dadarlatGroupQuasirepresentationsIndex2012} that the K-theoretic invariant of a group quasi-representation coincides with the index of Dirac operators twisted by the corresponding almost flat bundles.  This fact plays a key role in our proof of the index theorem of Wilson--Dirac operators.

\subsection{Statement of the theorem}
Let $d$ be a positive integer, standing for the dimension of the spacetime. 
For simplicity of discussion, here we only consider the case that $d$ is even (for a more general case, see Subsection \ref{section:4.3}). 
Let $V$ be the $d$-dimensional Euclidean space. That is, $V$ is the linear space $\bR^n$ equipped with the inner product $\langle{ \cdot}, {\cdot} \rangle$. Let $\Pi$ be the standard lattice of $V$, i.e., a cocompact rank $d$ free abelian subgroup $\Pi \subset V$ generated by an orthonormal basis $t_1,\cdots , t_n$ of $V$. Set $M:=V/\Pi$. It is equipped with the induced Riemannian metric, denoted by $g$. 

Let $\bCl(-V)$ denote the Clifford algebra of $V$ with the negative definite inner product. 
That is, $\bCl(-V)$ is the universal $\bC$-algebra generated by $\{ c(v) \mid v \in V\}$ with the relations $c(v)c(w)+c(w)c(v)=-2\langle v,w \rangle 1$. It is equipped with the C*-algebra structure by the adjoint $c(v^*)=-c(v)$.
Since $d$ is even, the algebra $\bCl(-V)$ has a unique irreducible $\ast$-representation $S$.

Let $E$ be a complex vector bundle on $M$ equipped with a hermitian inner product and let $\nabla$ be a hermitian connection on $E$. 
Then the twisted Dirac operator $D^E $ is defined as
\begin{align}
    D^E= \sum_{i=1}^n c(v_i) \nabla_{v_i} \colon \Gamma (M,E \otimes S) \to \Gamma (M , E \otimes S),\label{eq:Dirac}
\end{align}
where $v_1, \cdots , v_n$ is a standard basis of $T_0V \cong V$ corresponding to $t_1, \cdots, t_n$. Then $D^E$ is an odd self-adjoint operator with respect to the $\bZ_2$-grading of $S$. 
Let $D^E_0$ and $D^E_1$ denote the off-diagonal entries of $D^E$ with respect to the $\bZ_2$-grading $\gamma$ as $D_E=\begin{pmatrix} 0 & D^E_1 \\ D^E_0 & 0 \end{pmatrix}$.
Our main concern, the Fredholm index of $D^E$, is the integer
\[\Index (D^E) := \dim \ker (D^E_0) - \dim \ker (D^E_1),\]
which is well-defined since $D^E$ is an elliptic operator over a compact manifold $M$. 

Now we introduce a lattice approximation of the operator $D^E$. 
Firstly, we replace the covariant derivative $\nabla_{v_j}$ in (\ref{eq:Dirac}) with a difference operator. For a path $\gamma \colon [0,t] \to M$, we write $\Gamma_\gamma ^{E} \colon E_{\gamma(0)} \to E_{\gamma(1)}$ for the parallel transport of $(E,\nabla)$ along $\gamma$. Let $a >0$ stand for the scale of the lattice approximation (hence its is an inverse integer; $a=1/N$). Then the approximated spacetime is the quotient $a\Pi / \Pi \subset M$ of the finer lattice $a\Pi$.  
Now the shift operator in $j$-th direction is defined by the direct sum
\begin{align}
    U_j^{a,E}:=\bigoplus_{x \in a\Pi /\Pi} (\Gamma_{[x, x+a v_j]}^E \colon E_x \to E_{x+a v_j}), \label{eq:diff}
\end{align} 
which is a unitary operator on $\bigoplus_{x \in a\Pi / \Pi} E_x$. 
\begin{defn}\label{dfn:lattice}
The lattice covariant derivative $\nabla^{a,E}$, the lattice Dirac operator $D^{a,E}$, the lattice Wilson term $W^{a,E}$ and the lattice hermitian Wilson--Dirac operator $D^{a,E}_W$ with the scale $a=1/N$ are defined as following;
\begin{align}
\begin{split}
    \nabla^{a,E}_j &:= (U^{a,E}_j -1)/{a},\\
    D^{a,E}&:= \sum_j c(v_j) (\nabla^{a,E}_j - (\nabla^{a,E}_j)^*)/{2} ,\\
    W^{a,E}&:= \sum_j  (\nabla^{a,E}_j + (\nabla^{a,E}_j)^*)/{2} ,\\
    D_W^{a,E}&:= \gamma W^{a,E} + D^{a,E}.
\end{split}\notag 
\end{align}
Note that $D^{a,E}$, $W^{a,E}$ and $D_W^{a,E}$ acts on the Hilbert space 
\[ \sH^{a,E} :=\big(\bigoplus_{x \in a\Pi / \Pi} E_x \big) \otimes S.\]
\end{defn}

Morover, we impose the mass term $m_0\gamma$ to obtain the massive Wilson--Dirac operator $D_W^{a,E} + m_0 \gamma$. 
In this paper we consider the following two kinds of the mass term.
\begin{enumerate}
    \item The cut--off scale mass $m_0 = m/a$. 
    \item The constant mass term $m$ independent of the length-scale $a$.
\end{enumerate}
The massive hermitian Wilson--Dirac operator considered in the context of theoretical physics is mainly (1), or its functional calculus $(D_W^{a,E} + \frac{m}{a} )/ |D_W^{a,E} + \frac{m}{a}|$ (this operator is called the overlap Dirac operator).
On the other hand, the operator (2) is also considered in recent researches \cites{fukayaLatticeFormulationAtiyahPatodiSinger2019,fukayaAtiyahPatodiSinger2020}.
We show the index theorem for both of these operators. 

We define the "index" of the massive hermitian Wilson--Dirac operator as following. 
\begin{defn}\label{defn:I}
For a self-adjoint invertible matrix $A$, we define $\bfI(A)$ as
\[ \bfI (A):= \frac{\dim E_{>0}(A) - \dim E_{<0}(A)}{2}_{\textstyle ,} \]
where $E_{>0}(A)$ (resp.\ $E_{<0}(A)$) denotes the spectral subspaces of $A$ corresponding to positive (resp.\ negative) eigenvalues. 
\end{defn}
\begin{rmk}\label{rmk:I}
Assume that the (finite rank) Hilbert space on which $A$ acts is equipped with a $\bZ_2$-grading $\gamma$ such that the even and odd subspaces are isomorphic (the space $\sH^{a,E}$ above satisfies this assumption). Then we have 
\[\bfI (A) = \dim E_{>0}(A) - \dim E_{>0}(\gamma).  \]
This holds because $\dim E_{>0}(A) + \dim E_{<0}(A)=\dim \sH^{a,E}$ and $\dim E_{>0}(\gamma )=\dim \sH^{a,E}/2$.
\end{rmk}

\begin{thm}\label{thm:main}
The following hold.
\begin{enumerate}
    \item For any $0<m<2$, there is a length scale $a_0=1/N_0$ such that $D_W^{a,E} + \frac{m}{a}$ is invertible and 
    \[\bfI \Big(D_W^{a,E} +\frac{m}{a}\gamma\Big) = \Index (D^E) \]
    holds for any $0 < a < a_0$. 
    \item There is $m_0>0$ such that, for any $m>m_0$ there is a length scale $a_0=1/{N_0}$ such that the operator $D_W^{a,E} + m\gamma$ is invertible  and 
\[ \bfI (D_W^{a,E} +m\gamma) = \Index (D^E) \]
    holds for any $0<a<a_0$.
\end{enumerate}
\end{thm}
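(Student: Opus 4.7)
The plan is to recognize $(U^{a,E}_1,\dots,U^{a,E}_d)$ as a quasi-representation of $\bZ^d$, express the massive hermitian Wilson--Dirac operator as the image, under this quasi-representation, of a single universal self-adjoint element of $C^*(\bZ^d)\otimes\End(S)=C(T^d)\otimes\End(S)$, and then apply the higher index theorem of Hanke--Schick and Dadarlat for almost flat bundles. The first step is a standard holonomy estimate for $(E,\nabla)$ around an $a\times a$ plaquette, which gives $\|[U^{a,E}_i,U^{a,E}_j]\|=O(a^2\|F_\nabla\|_\infty)$. Consequently $e_j\mapsto U^{a,E}_j$ defines an $\varepsilon_a$-quasi-representation $\phi_a\colon \bZ^d\to U(\bigoplus_{x}E_x)$ with $\varepsilon_a\to 0$ as $a\to 0$, equivalently an approximate $\ast$-homomorphism $\tilde\phi_a\colon C(T^d)\to\bB(\bigoplus_xE_x)$ sending the $j$-th coordinate unitary $z_j$ to $U^{a,E}_j$.

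Next, with $z_j\in C(T^d)$ the universal unitaries, set
\[ \Omega^a_M\;:=\;\sum_j c(v_j)\frac{z_j-z_j^*}{2a}+\gamma\sum_j \frac{z_j+z_j^*-2}{2a}+M\gamma\ \in\ C(T^d)\otimes\End(S), \]
with $M=m/a$ in case~(1) and $M=m$ in case~(2), so that $\tilde\phi_a(\Omega^a_M)=D^{a,E}_W+M\gamma$. A fibrewise computation at $\theta\in T^d$, using that the Wilson part commutes with $\gamma$ while the Dirac part anticommutes with $\gamma$, yields $\Omega^a_M(\theta)^2=\bigl(M-\tfrac{2}{a}\sum_j\sin^2(\pi\theta_j)\bigr)^2+\tfrac{1}{a^2}\sum_j\sin^2(2\pi\theta_j)$. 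This vanishes only at the ``fermion doubler'' momenta $\theta_j\in\{0,\tfrac12\}$ with mass matching $2k/a$ for some $0\le k\le d$; these values are precisely excluded by $0<m<2$ in case~(1), and by $a<2/m$ in case~(2). Almost flatness then transfers the gap to $D^{a,E}_W+M\gamma$ itself for $a$ small enough. By Remark~\ref{rmk:I}, $\bfI(\tilde\phi_a(\Omega^a_M))$ is the image under $\tilde\phi_{a,*}$ of the K-theory class $[\chi_{(0,\infty)}(\Omega^a_M)]-[\chi_{(0,\infty)}(\gamma)]\in K_0(C(T^d)\otimes\End(S))\cong K_0(C(T^d))$. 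A homotopy through invertible self-adjoint elements, with invertibility checked fibrewise as above, identifies this class with the Bott class $\beta_d\in K_0(C(T^d))=K^0(T^d)$, the Poincar\'e dual of $[T^d]\in K_0(T^d)$; in particular it is independent of $a$ and of $M$ within the allowed range.

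Finally, the Hanke--Schick/Dadarlat theorem identifies $\tilde\phi_{a,*}(\beta_d)\in\bZ$ with $\Index(D^F)$, where $F$ is the almost flat bundle on $M$ assembled from $\phi_a$ via the monodromy correspondence. Since $\phi_a$ is built out of the parallel transport of $(E,\nabla)$ itself, a direct smoothing argument shows that $F\cong E$ stably once $a$ is small enough, whence $\Index(D^F)=\Index(D^E)$, proving both assertions. The main obstacle is the construction of the invertible-through homotopy in the second step: one must steer clear of the fermion doubler zeros, which is exactly what forces the mass thresholds $0<m<2$ in case~(1) and $a<2/m$ in case~(2). A secondary difficulty is verifying that the bundle recovered from $\phi_a$ by smoothing is indeed (stably) $E$, uniformly in $a$ for both mass regimes.
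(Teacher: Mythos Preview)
Your strategy coincides with the paper's: recognize the lattice shifts as a quasi-representation of $\Pi$, lift the massive Wilson--Dirac operator to a universal self-adjoint element of $C^*\Pi\otimes\End(S)$ representing the Bott class, and invoke the Hanke--Schick pairing. The fibrewise invertibility analysis at the doubler momenta and the identification of the universal class with $\beta$ are both correct.

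There is, however, a genuine gap in the final step. The almost flat bundle $F$ assembled from $\phi_a$ via the monodromy correspondence has rank $N^d\cdot\rank E$ (the dimension of $\bigoplus_{x\in a\Pi/\Pi}E_x$), and it is \emph{not} stably isomorphic to $E$ in general. What $\phi_a$ is the almost monodromy of is the Gromov--Lawson bundle $E_N:=q_{a!}\kappa_a^*E$, where $\kappa_a\colon V/N\Pi\to V/\Pi$ is the rescaling diffeomorphism and $q_a\colon V/N\Pi\to V/\Pi$ the $N^d$-fold cover; this is the content of Lemma~\ref{lem:GL}. A Chern-character computation gives $\ch_k(E_N)=N^{d-2k}\ch_k(E)$, so for $d\ge 4$ and $c_1(E)\ne 0$ one has $[E_N]-[E]\notin\bZ\cdot[1]$ in $K^0(M)$, and no ``smoothing'' will produce a stable isomorphism with $E$. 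The missing ingredient is the separate fact (Lemma~\ref{lem:index}) that $\Index(D^{E_N})=\Index(D^E)$, proved by exhibiting a unitary conjugating $D^{E_N}$ to $\tfrac{1}{a}D^E$ via $q_{a!}$ and $\kappa_a^*$. Only the top-degree part of $\ch$ is preserved, which is exactly what the index sees; your stable-isomorphism claim asks for all of $\ch$ to match and is false.

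A secondary point concerns case~(2). The assertion ``almost flatness transfers the gap'' needs a quantitative version here: writing $D^{a,E}_W+m\gamma=\tfrac{1}{a}\pi_{a,E}(\hat D_W)+m\gamma$, the commutator corrections in the square are of size $\tfrac{1}{a^2}\cdot O(a^2\|R_E\|)=O(\|R_E\|)$, which does \emph{not} tend to zero as $a\to 0$, while the universal gap at $\theta=0$ is only $m^2$. Hence invertibility requires $m^2$ to dominate a fixed multiple of $\|R_E\|$; this is where $m_0$ comes from (Proposition~\ref{prp:apriori}). Your condition $a<2/m$ controls only the invertibility of the universal element $\Omega^a_m$, not of its image under $\tilde\phi_a$.
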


The strategy of the proof is as follows. 
Firstly, in Section \ref{section:2}, we relate the lattice shift operator (\ref{eq:diff}) with the quasi-representation $\pi_{a,E}$ of $\Pi = \pi_1(M)$ obtained by the constructions due to Gromov--Lawson~\cite{gromovPositiveScalarCurvature1983} and Connes--Gromov--Moscovici~\cite{connesConjectureNovikovFibres1990}. 
This enables us to understand the Wilson--Dirac operator $D^{a,E}_W$ as the image of the "universal" element $\hat{D}_W$ of the matrix-coefficient group algebra $\bC[\Pi] \otimes \End (S)$ through $\pi_{a,E}$, as is done in Section \ref{section:3}. In particular, here we show that $\hat{D}_W^{a,E}+m\gamma$ is an invertible self-adjoint operator representing the Bott element $\beta \in \K_0(C^*\Pi) \cong \K^0(\hat{\Pi})$.  
Finally, in Section \ref{section:4}, we relate $\pi_{a,E}(\beta)$ with the index of $D^E$ by applying the argument of Hanke--Schick~\cite{hankeEnlargeabilityIndexTheory2006}.

After proving Theorem \ref{thm:main}, in Section \ref{section:4.3}, we discuss two generalizations of this theorem. 
The first is the lattice family index, which has been considered by Adams~\cites{adamsFamiliesIndexTheory2002}.
The second is the lattice Real and Clifford equivariant index. 
This generalization is also studied in the forthcoming paper by Fukaya et.~al.~\cite{fukayaAnalyticIndicesLattice}. 

\subsection*{Acknowledgement}
The author would like to thank Mikio Furuta for his inspiring seminar talk on the Wilson--Dirac operator at RIKEN. This work was supported by RIKEN iTHEMS and JSPS KAKENHI Grant Numbers 19K14544, JPMJCR19T2, 17H06461.

\section{Lattice covariant derivative and group quasi-representation}\label{section:2}
In this section we relate the lattice covariant derivative $\nabla^{a,E}_j$ given in Definition \ref{dfn:lattice} with the quasi-representation of the group $\Pi$ obtained by a combination of the Gromov-Lawson construction and the almost monodromy correspondence. 

\subsection{Group quasi-representation and almost monodromy}
We start with a brief review of group quasi-representations, almost flat bundles and their almost monodromy correspondence. For a more detail on this subsection, we refer the reader to \cites{connesConjectureNovikovFibres1990,hankeEnlargeabilityIndexTheory2006,carrionQuasirepresentationsSurfaceGroups2013,dadarlatGroupQuasirepresentationsAlmost2014,kubotaAlmostFlatRelative2020}. 

In general, for a finitely presented discrete group $\Gamma$ and its finite set of generators $\cG$, we say that a map $\pi \colon \Gamma \to \mathrm{U}(P)$ is a \emph{$(\varepsilon , \cG)$-representation} of $\Gamma$ on $P$ if $\pi(e)=1$ and
\[ \|\pi(g)\pi(h)-\pi(gh)\|<\varepsilon   \]
for any $g,h \in \cG$ such that $gh \in \cG$. 
Our interest is quasi-representations of the group $\Pi \cong \bZ^d$ with respect to the generating set $\cG=\{ e, t_i^{\pm 1}, t_i^{\pm 1}t_j^{\pm 1} \mid i,j=1,\cdots ,n \} $. 
Then an $(\varepsilon , \cG)$-representation $(\pi,\sH)$ of $\Pi$ corresponds to a family of unitary operators $U_j:=\pi(t_j)$ which are mutually almost commutative, namely $\| [U_j,U_l] \| < 2\varepsilon$. Conversely, if we have a mutually $\varepsilon$-commutative $d$-tuple of unitaries, then $\pi(t_j^{\pm 1})=U_j^{\pm 1}$ and $\pi(t_j^{\pm 1}t_l^{\pm 1}):=U_j^{\pm 1}U_l^{\pm 1}$ for $j<l$ determines a $(2\varepsilon, \cG)$-representation of $\Pi$.  
A typical example of almost commuting unitaries in $d=2$ is
\[U_1 = 
\begin{pmatrix} 
1 & 0 & 0  &\cdots &0 \\ 
0 & \zeta_n &0 & \cdots& 0   \\ 
0 &0&\zeta_n^2& \cdots &0 \\ 
\vdots &\vdots  &\vdots & \ddots &\vdots  \\ 
0 & 0 & 0 & \cdots & \zeta_n^{n-1}  
\end{pmatrix}_{\textstyle ,} 
U_2 =\begin{pmatrix} 
0 & 0   &\cdots & 0 &1 \\ 
1 & 0  & \cdots& 0& 0   \\ 
0 &1& \cdots &\vdots & 0  \\ 
\vdots & \vdots  &\ddots & 0 &\vdots   \\  
0 & \cdots &  0 & 1 & 0 
\end{pmatrix}_{\textstyle ,} \]
where $\zeta_n$ denotes the root of unity $e^{2 \pi i /n}$.

There is a fruitful construction, established by Connes--Gromov--Moscovici in \cite{connesConjectureNovikovFibres1990}, of a quasi-representation of the fundamental group $\pi_1(M)$ of a manifold $M$ from an almost flat bundle on $M$ as the monodromy "representation". 
 A pair $(E, \nabla)$ is said to be a \emph{$(\varepsilon ,g)$-flat vector bundle} on $M$ if $E$ is a hermitian vector bundle on $M$ and $\nabla$ is a hermitian connection on $E$ whose curvature tensor $R^E \in \Omega ^2(M,\End E)$ satisfies
\[ \| R_E \|:= \sup _{x \in M} \sup _{\xi \in \bigwedge ^2 T_xM \setminus \{ 0 \} } \frac{\| R_E(\xi )\|_{\End (E_x) } }{\| \xi \| }  <\varepsilon.\]

As in Section \ref{section:1}, we denote by $\Gamma^\nabla_\gamma$ the parallel transport on $E$ along a path $\gamma \colon [0,1] \to M$ with respect to the connection $\nabla$. 
Let us choose a collection of closed loops $\ell_j \in \Omega M$ such that the set $\cG:=\{ [\ell_j]\} $ generates $\pi_1(M)$. 
\begin{defn}\label{dfn:mono}
The collection of operators 
\[\pi_E ([\ell _j]):= \Gamma_{\ell_j}^{\nabla} \colon E_0 \to E_0 \]
forms a $(C \varepsilon , \cG)$-representation for some $C>0$ depending only on a choice of $\{ \ell_j\} $ and $g$. We call this $\pi_E$ the almost monodromy quasi-representation of $E$.
\end{defn}
This definition makes sense because of the estimate
\[ \| \Gamma _\gamma^\nabla -1 \| \leq \| R^E\| \cdot \mathrm{Area} (D) \]
for any loop $\gamma$ bounded by a surface $D$, i.e., $\gamma =\partial D \subset M$ (we refer to \cite{gromovPositiveCurvatureMacroscopic1996}*{Section $4\frac{1}{4}$}). 
We call this quasi-representation as the almost monodromy quasi-representation of $E$.

Hereafter we focus on the torus $M=V/\Pi$ equipped with the standard Riemannian metric. 
We fix $\ell_j \in \Omega M$ as the geodesic loop in $v_j$-direction starting from the origin $0 \in V/\Pi$. Note that $\ell_j$ represents $t_j \in \pi_1(M)$. In this case $U_j:=\pi_E(t_j)$ satisfies $\| [U_j,U_l] \| < \| R_E\|$ since the loop $\ell_l^{-1} \circ \ell_j^{-1}\circ \ell_l\circ \ell_j$ is bounded by a standard square, which has area $1$.

\subsection{Gromov--Lawson construction}\label{section:2.2}
Gromov-Lawson \cite{gromovPositiveScalarCurvature1983} gives a systematic construction of almost flat bundles using the topology, i.e., enlargeability, of the base space. 
Here we review their construction by focusing on the case that the base space $M$ is a torus.

Let $\kappa_a \colon V \to V$ denote scaling map by $a$, that is, $\kappa _a(v)=av$ for any $v \in V$. 
Since $\kappa _a (N\Pi) = \Pi$, $\kappa_a$ induces a continuous map
\[\kappa _a \colon V/N\Pi \to V/\Pi.\]
Note that $\kappa_a$ is $a^2$-area contracting, that is, the induced bundle map 
\[ d\kappa_a \colon \lwedge^2 TM \to \lwedge ^2TM\]
satisfies $ d\kappa_a = a^2$. Hence we have
\[\| R_{\kappa_a^*E} \| =  a^2\| R_E\| . \]
Let $q_a \colon V/N\Pi \to V/\Pi$ denote the $N^n$-fold covering map. Then the push-forward bundle
\[E_N:= q_{a!}\kappa_a^*E = \bigsqcup_{x \in V/\Pi} \bigoplus _{q(\bar{x})=x} (\kappa_a^*E)_{\bar{x}} \]
has a connection induced from that of $\kappa_a^*E$. 

\begin{lem}\label{lem:index}
For any vector bundle $E$ on $V/\Pi$, we have
\[ \Index (D^{E_N}) = \Index (D^E).\]
\end{lem}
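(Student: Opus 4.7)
The plan is to show that each of the two constructions producing $E_N$ from $E$---pull-back along $\kappa_a$ and push-forward along the finite cover $q_a$---preserves the twisted Dirac index.

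\emph{Step 1 (push-forward along $q_a$).} Because $q_a \colon V/N\Pi \to V/\Pi$ is a local isometry for the standard flat metrics induced from $V$, the canonical $L^2$-identification $\Gamma(V/\Pi, E_N \otimes S) \cong \Gamma(V/N\Pi, \kappa_a^* E \otimes q_a^* S)$ intertwines $D^{E_N}$ with the twisted Dirac operator $D^{\kappa_a^* E}$ on the cover. This holds because Clifford multiplication and the spin connection transfer correctly through the \'etale local isometry $q_a$, while fibrewise the construction $E_N = q_{a!}(\kappa_a^* E)$ is nothing but the forgetful functor from bundles on the cover to bundles on the base. Hence $\Index(D^{E_N}) = \Index(D^{\kappa_a^*E})$.

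\emph{Step 2 (pull-back along $\kappa_a$).} The map $\kappa_a \colon V/N\Pi \to V/\Pi$ is an orientation-preserving diffeomorphism---its lift to $V$ is the positive scaling $v\mapsto av$, which identifies $N\Pi$ with $\Pi$---and by definition $\kappa_a^* E$ corresponds to $E$ under it. Using Atiyah--Singer together with $\hat A(M)=1$ on the flat torus, one has
\[ \Index(D^{\kappa_a^* E}) = \int_{V/N\Pi} \kappa_a^* \ch(E) = \deg(\kappa_a)\int_{V/\Pi} \ch(E) = \Index(D^E), \]
since $\deg(\kappa_a)=+1$. Combining the two steps yields the claim. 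Alternatively, Steps 1 and 2 can be packaged into the single push-pull identity $\ch(q_{a!} G) = q_{a*}\ch(G)$ valid for the \'etale cover $q_a$.

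I do not anticipate a substantive obstacle; the only nontrivial bookkeeping is checking that the pushforward $q_{a!}$ really respects the spin structure and the spin connection as claimed in Step 1, and this reduces to $q_a$ being an \'etale local isometry between flat tori equipped with trivializations inherited from $V$.
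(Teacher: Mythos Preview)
Your argument is correct. Step~1 coincides with the paper's: the unitary $q_{a!} \colon L^2(V/N\Pi,\kappa_a^*E)\to L^2(V/\Pi,E_N)$ intertwines $D^{\kappa_a^*E}$ with $D^{E_N}$ because $q_a$ is an \'etale local isometry of flat tori.

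Step~2, however, proceeds differently. The paper does not invoke Atiyah--Singer; instead it writes down the explicit unitary
\[
\kappa_a^* \colon L^2(V/\Pi,E)\to L^2(V/N\Pi,\kappa_a^*E),\qquad \kappa_a^*(\xi)(x)=a^{-d/2}\xi(\kappa_a(x)),
\]
and checks directly that $\kappa_a^*$ conjugates $D^{\kappa_a^*E}$ to $\tfrac{1}{a}D^E$, whence the two operators share the same index. Your route via $\hat A(M)=1$ and the degree of the diffeomorphism $\kappa_a$ is entirely valid, but it imports the full Atiyah--Singer theorem to prove what amounts to conformal invariance of the Dirac index under a global rescaling. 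The paper's approach is more elementary and makes the mechanism transparent (the operator literally becomes a scalar multiple of $D^E$), at the cost of writing out the normalisation factor; yours is quicker to state and generalises without thought to arbitrary diffeomorphisms, but relies on a much deeper input.
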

\begin{proof}
By definition $q_!$ extends to a unitary 
\[ q_{a!} \colon L^2(V/N\Pi , \kappa_a^*E) \to L^2(V/\Pi,q_{a!}\kappa_a^*E).\]
Also, the conformal transformation $\kappa_a$ gives rise to a unitary 
\[ \kappa_a^* \colon L^2(V/\Pi, E) \to L^2(V/N\Pi, \kappa_a^*E) \]
determined by $\kappa_a^*(\xi)(x)=\frac{1}{a^{d/2}} \xi(\kappa_a(x))$. Now the diagram
\[
\xymatrix{
L^2(V/\Pi,q_!\kappa_a^*E) \otimes S\ar[r]^{D^{E_N}} \ar[d] & L^2(V/\Pi, q_!\kappa_a^*E)\otimes S \ar[d] \\
L^2(V/N\Pi , \kappa_a^*E) \otimes S \ar[r]^{D^{\kappa_a^*E}} \ar[d] & L^2(V/N\Pi , \kappa_a^*E) \otimes S  \ar[d] \\ 
L^2(V/\Pi,E) \otimes S \ar[r]^{\frac{1}{a}D^{E}} & L^2(V/\Pi,E) \otimes S
}
\]
commutes, and hence $\Index (D^{E_N}) = \Index (\frac{1}{a}D^E)=\Index (D^E)$ holds.
\end{proof}

\subsection{Lattice covariant derivative as almost monodromy}\label{section:2.3}
Finally we observe that the lattice covariant derivatives $U_j^{a,E}$ introduced in Section \ref{section:1} is nothing but the almost monodromy quasi-representation of the almost flat bundle obtained by the Gromov-Lawson construction.

Let $\bv_1 , \cdots , \bv_n$ denote the standard basis of the vector space $V$. 
Let $\ell_j$ denote the image of the affine segment $[0,\mathbf{v}_j] \subset V$ to $V/\Pi$, which is a closed loop representing $t_j \in \pi_1(V/\Pi)$. 
Let $\pi_{a,E}$ denote the corresponding almost monodromy quasi-representation of the bundle $q_!\kappa _a^* E$ in the sense of Definition \ref{dfn:mono}. Then it is a $( \| R_E\| a^2,\cG)$-representation of $\Pi$.
\begin{lem}\label{lem:GL}
The $(a^2\| R_E\| ,\cG)$-representation $\pi_{a,E}$ satisfies $\pi_{a,E}(t_j)=U_j^{a,E}$.
\end{lem}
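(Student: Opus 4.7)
The plan is to unwind the definitions of pushforward bundle, induced connection, and almost monodromy, and check that the two operators agree summand by summand.

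The first step is to set up the identification of the fiber $(E_N)_0$ with the domain of $U_j^{a,E}$. By definition,
\[
    (E_N)_0 = \bigoplus_{\bar y \in q_a^{-1}(0)} (\kappa_a^*E)_{\bar y},
\]
and $q_a^{-1}(0) = \Pi/N\Pi \subset V/N\Pi$. The map $\bar y \mapsto \kappa_a(\bar y) = y/N$ sets up a bijection $\Pi/N\Pi \xrightarrow{\sim} a\Pi/\Pi$, under which $(\kappa_a^*E)_{\bar y} = E_{\kappa_a(\bar y)} = E_{y/N}$. Hence I would identify
\[
    (E_N)_0 \;\cong\; \bigoplus_{z \in a\Pi/\Pi} E_z,
\]
which is the Hilbert space on which $U_j^{a,E}$ acts.

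Next, I would describe the parallel transport $\pi_{a,E}(t_j) = \Gamma^{\nabla_N}_{\ell_j}$ of the pushforward connection along $\ell_j$. For a pushforward bundle $q_{a!}\mathcal{E}$ equipped with the pushforward connection, the parallel transport along a path $\gamma$ in the base is computed by lifting $\gamma$ through $q_a$: for each $\bar y_0 \in q_a^{-1}(\gamma(0))$, take the unique lift $\tilde\gamma_{\bar y_0}$ of $\gamma$ to $V/N\Pi$ starting at $\bar y_0$; then the component of $\Gamma^{\nabla_N}_\gamma$ from the $\bar y_0$-summand goes to the summand indexed by the endpoint $\tilde\gamma_{\bar y_0}(1)$, and equals the parallel transport of $\mathcal{E}$ along $\tilde\gamma_{\bar y_0}$. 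Applied to $\mathcal{E} = \kappa_a^*E$ with its pullback connection, this parallel transport along $\tilde\gamma_{\bar y_0}$ coincides with $\Gamma^\nabla_{\kappa_a \circ \tilde\gamma_{\bar y_0}}$ on $E$.

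Finally, I would specialize to $\gamma = \ell_j$, the image in $V/\Pi$ of the segment $[0,\mathbf{v}_j]\subset V$. For $\bar y \in \Pi/N\Pi$, the lift to $V/N\Pi$ starting at $\bar y$ is the image of $[y, y+\mathbf{v}_j]\subset V$, with endpoint $\overline{y+\mathbf{v}_j} \in \Pi/N\Pi$ (note $\mathbf{v}_j \in \Pi$). Applying $\kappa_a$ turns this into the segment $[y/N,\, y/N + a\mathbf{v}_j] = [z, z+a\mathbf{v}_j]$ in $V/\Pi$ where $z := \kappa_a(\bar y)$. Thus the $z$-summand is sent to the $(z+a\mathbf{v}_j)$-summand by $\Gamma^E_{[z,\,z+a\mathbf{v}_j]}$, which matches formula \eqref{eq:diff} defining $U_j^{a,E}$ exactly.

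There is no genuine obstacle here; the whole content is keeping the three identifications straight (the bijection $\Pi/N\Pi \cong a\Pi/\Pi$ given by $\kappa_a$, the pushforward parallel transport as lift-then-parallel-transport, and the commutation of $\kappa_a^*$ with connections). The only place requiring a small check is that the lift of $\ell_j$ in $V/N\Pi$ is again a closed loop—i.e.\ that $\mathbf{v}_j \in \Pi$—which is clear since $\mathbf{v}_j = t_j$ is one of the standard generators.
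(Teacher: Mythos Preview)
Your proof is correct and follows essentially the same route as the paper: identify $(E_N)_0$ with $\bigoplus_{x\in a\Pi/\Pi}E_x$ via $\kappa_a$, describe the pushforward parallel transport by lifting through $q_a$ and then pushing through $\kappa_a$, and specialize to $\ell_j$. One small slip: in your final remark you say the lift of $\ell_j$ to $V/N\Pi$ is ``again a closed loop,'' but it is not---the lift goes from $\bar y$ to $\overline{y+\mathbf v_j}$, which are distinct in $\Pi/N\Pi$ when $N>1$; what you actually need (and correctly used earlier) is only that the endpoint lies in $q_a^{-1}(0)=\Pi/N\Pi$, which is automatic since $\ell_j$ is a loop in the base.
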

\begin{proof}
The fiber of $E_N =q_{a!}\kappa_a^* E$ at $0 \in V/\Pi$ is identified with the direct sum 
\[ \bigoplus_{q_a(x)=0} (\kappa_a^*E)_x = \bigoplus_{x \in \Pi/N\Pi} (\kappa_a^*E)_x =\bigoplus_{x \in a\Pi/\Pi} E_x . \]
The connection on $E_N$ is imposed from that of $E$ in the way that the parallel transport $\Gamma_{\gamma}^{E_N}$ along a path $\gamma \colon [0,1] \to V/\Pi$ is written as
\begin{align*}
 \Gamma_{\gamma}^{E_N} &= \bigoplus_{q_a(\tilde{\gamma}) =\gamma} \big( \Gamma_{\tilde\gamma}^{\kappa_a^*E} \colon \kappa_a^*E_{\tilde{\gamma}(0)} \to \kappa_a^*E_{\tilde{\gamma}(1)}  \big) \\
 &=\bigoplus_{q_a(\tilde{\gamma}) =\gamma} \big( \Gamma_{\kappa_a \circ \tilde\gamma}^{E} \colon E_{\kappa_a(\tilde{\gamma}(0))} \to E_{\kappa_a(\tilde{\gamma}(1))}  \big).
\end{align*}
In particular we have
\begin{align*}
    \pi_{a,E}(t_j)=\Gamma_{[0,\bv_j]}^{E_N} &= \bigoplus_{x \in \Pi/N\Pi} \big( \Gamma_{[x,x+\bv_j]}^{\kappa_a^*E} \colon (\kappa_a^*E)_{x} \to (\kappa_a^*E)_{x+\bv_j}  \big)\\
    &=\bigoplus_{x \in a\Pi/\Pi} \big( \Gamma_{[x,x+a\bv_j]}^{E} \colon E_{x} \to E_{x+a\bv_j}  \big).
\end{align*}
The right hand side is the same as the operator $U_j^{a,E}$ defined in (\ref{eq:diff}).
\end{proof}

\section{K-theory of the universal hermitian Wilson--Dirac operator}\label{section:3}
In the last section we observe that the shift operators $U_j^{a,E}$ are viewed as the image of $t_j \in \Pi$ under the quasi-representation $\pi_{a,E}$. This fact suggests that the lattice Wilson--Dirac operator is also viewed as the image of the `universal' element lying in the group algebra $\bC [\Pi]$, or its C*-algebra completion. 
In this section we describe the universal Wilson--Dirac operator as a self-adjoint element of a C*-algebra. This relates the invariant $\bfI(D_W^{a,E} + m\gamma)$ introduced in Definition \ref{defn:I} with the Bott element of the C*-algebra K-group $\K_0(C^*\Pi)$.

\subsection{Universal hermitian Wilson--Dirac operator}
Let $\hat{\Pi}$ denote the Pontrjagin dual $\Hom(\Pi,\bT)$ of $\Pi$. According to the Gelfand--Naimark duality, the group C*-algebra $C^*\Pi$ is isomorphic to the C*-algebra $C(\hat{\Pi})$ of continuous functions on $\Pi$ by regarding $t \in \Pi$ as the continuous function $\chi \mapsto \chi(t)$. 
\begin{defn}\label{defn:WD}
We define the universal hermitian Wilson--Dirac operator $\hat{D}_W \in C^*\Pi \otimes \End(S) $ as $\hat{U}_j:=t_j$, $\hat \nabla_j:= \hat U_j-1$ and 
\begin{align*}
    \hat{D}&:= \sum c(v_j)({\hat \nabla_j - \hat \nabla_j^*})/{2}, \\
    \hat W&:= \sum ({\hat \nabla_j + \hat \nabla_j^*})/{2},\\
    \hat{D}_W&:=\hat{D} +\gamma \hat{W}.
\end{align*} 
\end{defn}
Let $U_1, \cdots, U_d$ be a $d$-tuple of mutually $\varepsilon$-commuting unitaries on a (finite rank) Hilbert space $\sH$ and let $\pi$ denotes the $(\varepsilon , \cG)$-representation of $\Pi$ determined by $\pi(t_j):=U_j$ and $\pi(t_j^{\pm 1}t_l^{\pm 1}):=U_j^{\pm 1}U_l^{\pm 1}$. We extend $\pi$ to a linear map $\pi \colon \bC[\Pi] \to \bB(\sH)$ by 
\begin{align}
\pi (t_1^{k_1} \cdots t_d^{k_d}) :=\pi(t_1)^{k_1}\cdots \pi^\flat(t_d)^{k_d}.
\end{align}
Then we can associate to $\pi$ a matrix
\begin{align} \pi(\hat{D}_W) = \sum_{j=1}^n \frac{1}{2}(U_j - U_j^*) \cdot c(v_j) + \sum_{j=1}^n \Big( \frac{1}{2}(U_j+U_j^*) -1\Big) \cdot \gamma . \end{align}

By comparing the above definitions with Definition \ref{dfn:lattice}, Lemma \ref{lem:GL} is rephrased as the following lemma.
\begin{lem}\label{lem:universality}
The hermitian Wilson--Dirac operator $D^{a,E}_W$ coincides with $\frac{1}{a}\pi_{a,E} (\hat{D}_W)$.
\end{lem}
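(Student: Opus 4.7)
The plan is to reduce the identity to a direct algebraic substitution using Lemma \ref{lem:GL}. Everything is already in place: Lemma \ref{lem:GL} identifies $\pi_{a,E}(t_j)$ with the lattice shift $U_j^{a,E}$, so the content of Lemma \ref{lem:universality} is just to verify that the formula defining $\hat D_W$ is transported correctly under $\pi_{a,E}$, with a single overall factor of $1/a$ coming from the scaling in the definition of $\nabla_j^{a,E}$.

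First I would check that $\pi_{a,E}$ is compatible with taking inverses on the generators $t_j$. The only nontrivial point is that $(U_j^{a,E})^{-1} = (U_j^{a,E})^\ast$; this is immediate because each summand of $U_j^{a,E}$ is a parallel transport of a hermitian connection along an isometric segment, hence unitary. In particular $\pi_{a,E}(t_j^\ast) = \pi_{a,E}(t_j)^\ast$, so the linear extension of $\pi_{a,E}$ to $\bC[\Pi]$ appearing just before the lemma statement is $\ast$-preserving on the relevant generators.

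Next I would compute, for each $j$,
\[
    \pi_{a,E}(\hat \nabla_j) \;=\; \pi_{a,E}(t_j-1) \;=\; U_j^{a,E}-1 \;=\; a\, \nabla_j^{a,E},
\]
and therefore $\pi_{a,E}(\hat \nabla_j^\ast) = a (\nabla_j^{a,E})^\ast$. Substituting into the definitions of $\hat D$ and $\hat W$ from Definition \ref{defn:WD} and comparing with Definition \ref{dfn:lattice} gives $\pi_{a,E}(\hat D) = a\, D^{a,E}$ and $\pi_{a,E}(\hat W) = a\, W^{a,E}$. Since $\gamma \in \End(S)$ is untouched by $\pi_{a,E}$ (which acts only on the group-algebra tensor factor), we conclude
\[
    \pi_{a,E}(\hat D_W) \;=\; \pi_{a,E}(\hat D) + \gamma\, \pi_{a,E}(\hat W) \;=\; a\bigl(D^{a,E} + \gamma\, W^{a,E}\bigr) \;=\; a\, D_W^{a,E},
\]
which is the claimed identity after dividing by $a$.

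There is essentially no obstacle here beyond bookkeeping; the lemma is a consistency check that Definition \ref{defn:WD} was set up so as to exactly match Definition \ref{dfn:lattice} under the almost-monodromy map. The only thing to be careful about is separating the two tensor factors (group-algebra vs.\ Clifford), and making sure the factor of $a$ from $\nabla_j^{a,E} = (U_j^{a,E}-1)/a$ is tracked consistently, which is why the equality carries a $1/a$ on the right.
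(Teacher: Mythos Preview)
Your proof is correct and follows exactly the approach the paper indicates: the paper simply states that Lemma~\ref{lem:GL} together with a comparison of Definitions~\ref{dfn:lattice} and~\ref{defn:WD} gives the result, and you have spelled out that comparison in detail, correctly tracking the factor of $a$ coming from $\nabla_j^{a,E}=(U_j^{a,E}-1)/a$.
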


Since the elements $\hat{U}_j$ and $\hat{U}_l$ are commutative, the spectral theory of the square $(\hat{D}_W +m\gamma )^2$ of the massive universal hermitian Wilson--Dirac operator is much easier than analysing Wilson--Dirac operators itself.  
\begin{lem}\label{lem:invertible}
For $0<m<2$, the massive universal hermitian Wilson--Dirac operator $\hat{D}_W + m\gamma$ is invertible.
\end{lem}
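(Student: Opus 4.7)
The plan is to use Gelfand--Naimark duality to reduce the invertibility of $\hat{D}_W + m\gamma$ in $C^*\Pi \otimes \End(S) \cong C(\hat{\Pi}, \End(S))$ to a pointwise check on the Pontrjagin dual $\hat{\Pi} \cong \bT^d$. Writing a character $\chi$ via $\chi(t_j) = e^{i\theta_j}$, evaluation sends $\hat{U}_j$ to $e^{i\theta_j}$, so
\[ \hat{D}(\chi) = \sum_j i\sin\theta_j \cdot c(v_j), \qquad \hat{W}(\chi) = \sum_j (\cos\theta_j - 1), \]
and the operator of interest becomes $\hat{D}(\chi) + (\hat{W}(\chi) + m)\gamma \in \End(S)$. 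Self-adjointness is immediate (using $c(v)^* = -c(v)$ together with $((\hat{U}_j - \hat{U}_j^*)/2)^* = -(\hat{U}_j - \hat{U}_j^*)/2$), so it suffices to bound $(\hat{D}_W+m\gamma)^2$ away from zero.

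The key algebraic step is that $\hat{D}$ is odd with respect to the $\bZ_2$-grading $\gamma$ while $\hat{W} + m \in C^*\Pi$ is even and commutes with $\hat{D}$; consequently $\hat{D}$ and $\gamma(\hat{W}+m)$ anti-commute, and the cross terms in the square cancel to give
\[ (\hat{D}_W + m\gamma)^2 = \hat{D}^2 + (\hat{W} + m)^2. \]
Applying the Clifford relations $c(v_j)c(v_k) + c(v_k)c(v_j) = -2\delta_{jk}$ at a character $\chi$, the off-diagonal contributions to $\hat{D}(\chi)^2$ cancel pairwise and the diagonal ones produce the scalar $\sum_j \sin^2\theta_j$. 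Thus pointwise
\[ (\hat{D}_W + m\gamma)^2(\chi) = \sum_j \sin^2\theta_j + \Big(\sum_j (\cos\theta_j - 1) + m\Big)^2. \]

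The final step is to verify that this nonnegative scalar function on $\bT^d$ is bounded below by a positive constant. Both summands are nonnegative, and the first vanishes exactly at the $2^d$ characters with $\theta_j \in \{0, \pi\}$ for every $j$. At such a point, if $k$ coordinates equal $\pi$, the second term collapses to $(m - 2k)^2$, which is nonzero unless $m \in \{0, 2, 4, \ldots, 2d\}$; the hypothesis $0 < m < 2$ rules out all these values. Continuity together with compactness of $\bT^d$ then yields a uniform positive lower bound, proving invertibility. I do not foresee a serious obstacle: the main content is the anti-commutation that separates $\hat{D}^2$ from the mass/Wilson piece, after which the verification is a finite case analysis at the zero locus of $\sum_j \sin^2\theta_j$.
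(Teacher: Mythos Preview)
Your proof is correct and follows essentially the same approach as the paper: pass to $C(\hat{\Pi})$ via Gelfand--Naimark, compute $(\hat{D}_W+m\gamma)^2=\sum_j\sin^2\theta_j+\big(\sum_j(\cos\theta_j-1)+m\big)^2$, and observe that the two nonnegative summands cannot vanish simultaneously when $0<m<2$. The paper states the last point in one line, whereas you spell out the case analysis at the $2^d$ points with $\theta_j\in\{0,\pi\}$; both arguments are the same in substance.
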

\begin{proof}
Here we write as $t_j=e^{2\pi i k_j}$, where $k_j$ is the $j$-th coordinate of $\hat{\Pi} \cong (\bR/\bZ)^d$. Through the identification $C^*\Pi \cong C(\hat{\Pi})$, the operators $\hat{D}$ and $\hat{W}$ are identified with $\End(S)$-valued functions on $\hat{\Pi}$ as
\[\hat{D}=\sum_{j=1}^d c(v_j) \cdot i \sin (2\pi k_j), \ \ \ \ \hat{W}=\sum_{j=1}^d (\cos(2\pi k_j)-1).\]
Hence we have
\begin{align*}
    \begin{split}
    &(\hat{D}_W + m\gamma)^2 = \hat{D}^2 + (\hat{W}+m)^2\\
    =&\sum _j\sin (2 \pi k_j)^2 + \Big( \big( \sum_j \cos(2\pi k_j)-1 \big) +m \Big) ^2 \geq 0. 
    \end{split}
\end{align*}
The first and second components are both non-negative and never be zero simultaneously. Therefore $(\hat{D}_W + m\gamma)^2 >0$ holds, i.e., the matrix-valued function $\hat{D}_W + m\gamma$ is invertible.
\end{proof}

\subsection{K-theory class of the universal Wilson--Dirac operator}
Let $A$ be a C*-algebra. 
If $A$ is unital, i.e., $1 \in A$, its $\K_0$-group $\K_0(A)$ is defined as the set of homotopy classes of projections in $\bM_n(A)$ with the summation given by the direct sum (for a foundation of C*-algebra K-theory, we refer to \cite{rordamIntroductionTheoryAlgebras2000}).
We remark that the space of self-adjoint and invertible operators in a C*-algebra $A$ is homotopy equivalent to the space of projections by the continuous map $h \mapsto (h/|h| +1)/2$, with the homotopy inverse $p \mapsto 2p-1$. Hereafter, for a self-adjoint operator $h \in \bM_n(A) $ we simply write $[h]$ for the corresponding element of the $\K_0$-group $[(h/|h| +1)/2] \in \K_0(A)$.

By the Serre--Swan theorem, the $\K_0$-group $\K_0(C(X))$ of the commutative C*-algebra $C(X)$ of continuous functions on a compact space $X$ is isomorphic to the topological $\K$-group $\K^0(X)$. Hence the group $\K_0(C^*\Pi) \cong \K^0(\hat{\Pi})$ contains the Bott element $\beta$ of top degree, i.e., the image of the generator of $\K^0(\bD^n,S^{n-1}) \cong \bZ$ to $\K^0(\hat{\Pi})$ with respect to an open embedding.
\begin{lem}\label{lem:Bott}
Let $c(v)$ and $\gamma$ be elements of $\bCl_d \cong \End(S)$ as in Section \ref{section:1}. Set
\[h:= \sum_{j=1}^N c(v_j) \cdot i x_j + \gamma x_0 \colon S^n \to \End(S),\]
where $x_0,\cdots, x_n$ is the standard coordinate of $S^n \subset \bR^{n+1}$.
Then $h$ is a self-adjoint matrix-valued function with $h^2=1$ and the Bott element $\beta \in \K^0(S^n)$ is represented by $h$ as $\beta =[h]-[\gamma ]$.  
\end{lem}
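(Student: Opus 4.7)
The plan is to verify the two algebraic assertions by direct computation, then identify $[h]-[\gamma]$ with the standard Bott generator via the Atiyah--Bott--Shapiro construction.

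For self-adjointness, I would use $c(v_j)^*=-c(v_j)$ (the $\bCl(-V)$ convention) together with $(ix_j)^*=-ix_j$, so that each summand $c(v_j)\cdot ix_j$ is self-adjoint; combined with $\gamma^*=\gamma$ this yields $h^*=h$. For $h^2=1$, I would expand the square using the Clifford relations $c(v_j)c(v_k)+c(v_k)c(v_j)=-2\delta_{jk}$ and the anticommutativity $\gamma c(v_j)+c(v_j)\gamma=0$. The diagonal Clifford squares contribute $\sum_{j=1}^n x_j^2$, the off-diagonal Clifford pairs cancel, the cross terms between $\gamma x_0$ and $c(v_j)\,ix_j$ cancel in pairs, and $(\gamma x_0)^2=x_0^2$; the total is $\sum_{j=0}^n x_j^2=1$ on $S^n$.

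For the K-theoretic identification, I first observe that at the north pole $N=(1,0,\dots,0)$ one has $h(N)=\gamma$, so $[h]-[\gamma]$ lies in the reduced group $\tilde{\K}^0(S^n)$. I would then decompose $S^n=D^n_+\cup_{S^{n-1}} D^n_-$ along the equator $\{x_0=0\}$ and introduce on $D^n_\pm$ the straight-line homotopy
\[
h_t^\pm(x):=\gamma\bigl(tx_0\pm(1-t)\bigr)+t\sum_{j=1}^n c(v_j)\cdot ix_j,\qquad t\in[0,1].
\]
The same algebraic computation as in the first paragraph shows
\[
(h_t^\pm)^2=(tx_0\pm(1-t))^2+t^2(1-x_0^2),
\]
and a short case analysis confirms that the two summands never vanish simultaneously on $D^n_\pm$. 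Hence $h$ is homotopic through self-adjoint invertibles to $\gamma$ on $D^n_+$ and to $-\gamma$ on $D^n_-$, so the $+1$-eigenbundle of $h$ is trivialized as $S^+$ over $D^n_+$ and as $S^-$ over $D^n_-$.

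To conclude, I would observe that the clutching datum on the equator is controlled by $h|_{x_0=0}=\sum_j c(v_j)\cdot ix_j$, which (up to the factor of $i$) is precisely the Atiyah--Bott--Shapiro symbol attached to the irreducible graded $\bCl_n$-module $S=S^+\oplus S^-$. The Atiyah--Bott--Shapiro theorem then identifies the resulting class in $\tilde{\K}^0(S^n)\cong \K^{-n}(\mathrm{pt})\cong\bZ$ with the Bott generator. The main obstacle I expect is convention-dependent bookkeeping of signs and orientations in matching our explicit $h$ with the ABS convention for $\beta$; since $\tilde{\K}^0(S^n)\cong\bZ$ this only affects the sign of $\beta$, and can be absorbed into a fixed choice of orientation of $S^n$.
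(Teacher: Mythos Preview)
Your argument is correct. The algebraic verifications are straightforward, and your hemispherical decomposition together with the straight-line homotopies $h_t^\pm$ correctly exhibits the positive eigenbundle of $h$ as the bundle on $S^n$ obtained by clutching the trivial bundles $S^+$ and $S^-$ along the Atiyah--Bott--Shapiro symbol $\sigma(x)=\sum_j c(v_j)\,ix_j\big|_{S^+}\colon S^+\to S^-$ on the equator; the ABS theorem then identifies $[h]-[\gamma]$ with the Bott generator, up to the sign you rightly flag as orientation-dependent. (One small point worth making explicit: the clutching map is not literally $h|_{x_0=0}$ but the composite of the two eigenbundle trivialisations along $h_t^+$ and $h_t^-$ over the equator; it is this composite that reduces to the ABS symbol.)

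The paper takes a rather different route. Instead of a clutching argument, it invokes Segal's connective $\K$-theory model: one passes through the space $F_d(\infty)$ of commuting $(d{+}1)$-tuples of self-adjoint matrices with $\sum A_j^2=1$, which maps both to the space $\Phi_d(\infty)$ of self-adjoint unitaries on $S\otimes\bC^n$ (via $(A_0,\dots,A_d)\mapsto A_0\gamma+\sum iA_jc(v_j)$) and to the infinite symmetric product $\mathrm{SP}^\infty(S^d)$ (via the joint spectrum). Both maps are $\pi_d$-isomorphisms by results of Segal and the author; the tautological inclusion $S^d\hookrightarrow F_d(1)$ lands on $h$ on the $\Phi_d$-side and on the degree-one map on the configuration-space side, which pins down $[h]-[\gamma]=\beta$ without ambiguity. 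Your approach is more elementary and self-contained, relying only on the classical ABS theorem; the paper's approach is less explicit about the bundle but avoids the orientation bookkeeping by reading off the degree directly from the joint-spectrum map, and it dovetails with the Real and $\Cl$-equivariant generalisation treated later.
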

\begin{proof}
This is understood from the viewpoint of Segal's connective K-theory \cite{segalKhomologyTheoryAlgebraic1977}.
Let $F_d(n)$ denote the set of mutually commuting $d+1$-tuple $(A_0,A_1, \cdots , A_d)$ of $n \times n$ matrices such that $\sum A_i^2=1$ and set $F_d(\infty)=\bigcup_n F_d(n)$ with respect to the inclusion $(A_0,A_1,\cdots, A_n) \mapsto (A_0 \oplus 1, A_1 \oplus 0, \cdots, A_d \oplus 0)$. 
Similarly, let $\Phi_d(n)$ denote the set of self-adjoint unitaries on $S \otimes \bC^n$ and set $\Phi_d(\infty) = \bigcup_n \Phi_d(n)$ with respect to $B \mapsto B \oplus \gamma$. 
Then the map $\varphi \colon F_d(\infty) \to \Phi_d(\infty)$ given by $(A_0, A_1, \cdots , A_d) \mapsto A_0\gamma + \sum i A_j c(v_j)$ induces an isomorphism of the $\pi_d$-group (\cite{segalKhomologyTheoryAlgebraic1977}*{Proposition 1.3}). Moreover, the map $c \colon F_d \to P_d$, where $P_d$ is the configuration space of points in $S^d$ (in other words, $P_d=\mathrm{SP}^\infty(S^d)$), sending $(A_0,A_1, \cdots, A_d)$ to its joint spectrum, also induces an isomorphism of the $\pi_d$-group (\cite{kubotaJointSpectralFlow2016}*{Proposition 3.4}). Now the map $\bx:=(x_0,x_1, \cdots ,x_d) \colon S^d \to F_d(1)$ satisfies $\varphi \circ \bx = h$ and $[c \circ \bx ] \in [S^d,P_d] \cong H^d(S^d) \cong \bZ$ is the generator. This shows the lemma. 
\end{proof}

As is proved in Lemma \ref{lem:invertible}, the massive hermitian Wilson--Dirac operator $\hat{D}_W + m \gamma$ is a self-adjoint invertible element of the C*-algebra $C^*\Pi \otimes \End(S)$, and hence it determines an element $[\hat{D}_W+m\gamma] \in \K_0(C^*\Pi) \cong \K^0(\hat{\Pi})$.
 
\begin{prp}\label{lem:univ}
For any $0 < m < 2$, the K-theory class $[\hat{D}_W + m\gamma ] \in \K_0(C^*\Pi) \cong \K^0(\hat{\Pi})$ satisfies
    \[[\hat{D}_W + m\gamma ] - [\gamma ] = \beta. \]
\end{prp}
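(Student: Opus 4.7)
The plan is to identify $\hat D_W + m\gamma$, via Gelfand--Naimark, as the pullback of the Clifford representative $h$ of Lemma~\ref{lem:Bott} along an explicit map $\bT^d \to S^d$, and then reduce the proposition to a degree-$1$ calculation.

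Concretely, writing $t_j = e^{2\pi i k_j}$ so that $\hat\Pi \cong \bT^d$, the proof of Lemma~\ref{lem:invertible} already displays
\[
    \hat D_W + m\gamma \;=\; \sum_{j=1}^d c(v_j)\cdot i\,x_j(k) + \gamma\,x_0(k),
\]
with $x_j(k):=\sin(2\pi k_j)$ for $j\ge 1$ and $x_0(k):=m+\sum_j(\cos(2\pi k_j)-1)$. Setting $\mathbf{x}:=(x_0,x_1,\ldots,x_d)\colon \bT^d\to\bR^{d+1}$, this identity is exactly $h\circ \mathbf{x}$. Lemma~\ref{lem:invertible} says $\mathbf{x}$ avoids $0$, so $\hat{\mathbf{x}}:=\mathbf{x}/|\mathbf{x}|\colon \bT^d\to S^d$ is well defined. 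Since $h$ satisfies $h(x)^2=|x|^2$, the polar normalization of $\hat D_W+m\gamma$ is $h\circ\hat{\mathbf{x}}$; combined with Lemma~\ref{lem:Bott}, this gives, in $\K_0(C^*\Pi)\cong\K^0(\hat\Pi)$,
\[
    [\hat D_W+m\gamma]-[\gamma] \;=\; \hat{\mathbf{x}}^*\bigl([h]-[\gamma]\bigr) \;=\; \hat{\mathbf{x}}^*\beta_{S^d},
\]
where $\beta_{S^d}$ denotes the Bott class on $S^d$.

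The next step is to reduce the claim to $\deg\hat{\mathbf{x}}=1$. Since $\pi_j(S^d)=0$ for $j<d$, obstruction theory yields $[\bT^d,S^d]\cong\bZ$ via degree, with the generator given by the standard collapse $\bT^d\twoheadrightarrow \bT^d/(\bT^d)^{(d-1)}\cong S^d$; by its definition through an open embedding $\bD^d\hookrightarrow\hat\Pi$, the Bott element $\beta$ is the pullback of $\beta_{S^d}$ along this collapse. Hence any degree-$1$ map $\bT^d\to S^d$ pulls $\beta_{S^d}$ back to $\beta$.

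Finally, I would compute $\deg\hat{\mathbf{x}}$ by counting signed preimages at the regular value $(1,0,\ldots,0)\in S^d$. The condition $\sin(2\pi k_j)=0$ forces $k_j\in\{0,1/2\}$, and if $r$ of the $k_j$ equal $1/2$ then $x_0(k)=m-2r$; the hypothesis $0<m<2$ makes $x_0(k)>0$ possible only when $r=0$, so $k=0$ is the unique preimage. A short differentiation shows the Jacobian of $\hat{\mathbf{x}}$ at $k=0$ equals $(2\pi/m)\cdot\mathrm{id}_{\bR^d}$ in the natural tangent chart on $S^d$ at the north pole, giving local degree $+1$.

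The least routine ingredient is the middle step—identifying the Bott class defined by an open embedding $\bD^d\hookrightarrow\hat\Pi$ with the pullback along the collapse map, so that the classification of $[\bT^d,S^d]$ by degree becomes relevant. The Clifford rewriting and the preimage count are essentially a direct reading of the formulas, with the constraint $m<2$ playing the transparent role of forcing exactly one lattice preimage at the north pole.
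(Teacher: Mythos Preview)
Your proposal is correct and essentially identical to the paper's own proof: the paper writes the normalized operator as $F^*h$ for the same map $F=\hat{\mathbf{x}}\colon \hat\Pi\to S^d$, observes that $(1,0,\ldots,0)$ is a regular value with the single preimage $\{0\}$ so that $\deg F=1$, and concludes via Lemma~\ref{lem:Bott}. Your write-up is in fact slightly more careful than the paper's in two places---you spell out why a degree-$1$ map $\bT^d\to S^d$ pulls $\beta_{S^d}$ back to $\beta$ (via the collapse map and $[\bT^d,S^d]\cong\bZ$), and you compute the Jacobian explicitly---but these are elaborations of the same argument rather than a different route.
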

\begin{proof}
Recall the identification of $\hat{D}_W$ with an $\End(S)$-valued continuous function on $\hat{\Pi}$ as in the proof of Lemma \ref{lem:invertible}. 
Let 
\[ f:=\Big(\sum _j\sin (2 \pi k_j)^2 + \big( \big( \sum_j \cos(2\pi k_j)-1 \big) +m \big) ^2 \Big)^{1/2}_{\textstyle ,}\]
which is a positive function on $\hat{\Pi}$. In the space of self-adjoint invertible operator-valued functions, $\hat{D}_W + m\gamma$ is homotopic to 
\[ \frac{\hat{D}_W + m\gamma}{f} = \sum c(v_j)\frac{i\sin (2\pi k_j)}{f} + \gamma \frac{\big( \sum_j (\cos(2\pi k_j)-1) \big) + m }{f}=F^*h, \] 
where $F \colon \hat{\Pi} \to S^d$ is defined as 
\[ F(\bk)=\bigg( \frac{\big( \sum_j (\cos(2\pi k_j)-1) \big) + m }{f(\bk)}_{\textstyle ,} \frac{\sin(2\pi k_1)}{f(\bk)}_{\textstyle ,} \cdots {\bigg. }_{\textstyle ,} \frac{\sin (2 \pi k_d)}{f(\bk)}\bigg)\]
for $\bk=(k_1,\cdots, k_d) \in \hat{\Pi}$. Since $(1,0,\cdots , 0)$ is the regular value of $F$ and $F^{-1}(1,0,\cdots ,0)=\{ (0,\cdots, 0)\}$, the degree of $F$ is $1$ and hence 
\[ [\hat{D}_W + m\gamma] - [\gamma] = F^*([h]-[\gamma])=\beta \in \K^0(\hat{\Pi})\]
by Lemma \ref{lem:Bott}.
\end{proof}

\begin{rmk}
The invertible matrix-valued function $\hat{D}_W + m\gamma$ and its K-theory class in $ \K^0(\hat{\Pi})$, particularly when $d =0,1,2,3$, is also known as a model of Chern insulators in the theory of topological insulators in condensed-matter physics (see for example \cite{prodanBulkBoundaryInvariants2016}*{Section 2.2.4}).  
\end{rmk}

\subsection{Approximate $\ast$-homomorphism and K-theory}\label{section:3.3}
For a unital C*-algebra $A$, a unitary representation $\pi \colon \Pi \to \cU(A)$ extends to a $\ast$-homomorphism $C^*\Pi \to A$. Although a single quasi-representation does not necessarily extend to a continuous map on $C^*\Pi$, a nice collection $\pi_n$ of quasi-representations is able to be treated as a single $\ast$-homomorphism in the following way. 

Let $\pi_n \colon \cG \to \cU(\sH_n)$ be a sequence of finite rank $(\varepsilon_n,\cG)$-representations such that $\varepsilon_n \to 0$. We pick a collection of embeddings $\sH_n \subset \sH$ into a single separable infinite dimensional Hilbert space $\sH$. We simply write $\bK$ for the compact operator algebra $\bK(\sH)$. Let us consider the quotient C*-algebra
\[ \cQ:= \frac{\prod _{n \in \bN} \bK }{\bigoplus _{n \in \bN} \bK}_{\textstyle .} \]
For $(a_n)_{n \in \bN} \in \prod _{\bN} \bK$, we write $(a_n)_{n \in \bN}^{\flat }$ for its image in $\cQ$.
denote the projection. Then
\begin{align} \pi^\flat (t_j):= (\pi_n(t_j) \oplus 1_{\sH_n^\perp})_{n \in \bN}^\flat  \label{eq:pitilde} \end{align}
satisfies $[\pi^\flat(t_j) , \pi^\flat(t_l)] =0$. That is, $\pi^\flat$ extends to a unitary representation of $\Pi $ by $\pi^\flat (t_1^{k_1} \cdots t_d^{k_d}) :=\pi^\flat(t_1)^{k_1}\cdots \pi^\flat(t_d)^{k_d}$, and hence we obtain a $\ast$-homomorphism 
\[\pi^\flat \colon C^*\Pi \to \cQ. \]

\begin{lem}\label{lem:KofQ}
The $\K_0$-group of C*-algebras $\bigoplus_\bN \bK$, $\prod_\bN \bK$ and $\cQ$ are 
\[ \K_0({\textstyle \bigoplus_\bN \bK} ) \cong {\textstyle \bigoplus_\bN \bZ}, \ \ \ \K_0({\textstyle \prod_\bN \bK}) \cong {\textstyle \prod_\bN \bZ}, \ \ \ \K_0 (\cQ) \cong \frac{\prod_\bN \bZ }{\bigoplus_\bN \bZ} \]
respectively. 
\end{lem}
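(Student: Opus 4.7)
The plan is to establish the three isomorphisms in order, treating the first two by direct computation and then obtaining the third via the six-term exact sequence attached to $0 \to \bigoplus_\bN \bK \to \prod_\bN \bK \to \cQ \to 0$.

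\textbf{Step 1.} For $\K_0(\bigoplus_\bN \bK) \cong \bigoplus_\bN \bZ$, I would write $\bigoplus_\bN \bK$ as the inductive limit of the finite direct sums $\bigoplus_{n \leq N} \bK$ under the evident inclusions. Continuity of K-theory under inductive limits, together with $\K_0(\bK) = \bZ$, yields the claim. The same argument with $\K_1(\bK) = 0$ shows $\K_1(\bigoplus_\bN \bK) = 0$; this will be used in Step 3.

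\textbf{Step 2.} For $\K_0(\prod_\bN \bK) \cong \prod_\bN \bZ$, I would construct a coordinate-wise rank map. Any projection in $\bM_k(\prod_\bN \bK) \cong \prod_\bN \bM_k(\bK)$ is a sequence $(p_n)$ of finite-rank projections (since every projection in $\bK$ is finite-rank), so sending $[(p_n)]$ to $(\rank p_n)_n \in \prod_\bN \bZ$ defines a homomorphism on $\K_0$. For surjectivity, given $(m_n) \in \prod_\bN \bZ$, take sequences of projections in $\bK$ of rank $m_n^+$ and $m_n^-$; both are norm-bounded by $1$, hence live in $\prod_\bN \bK$, and their formal difference represents $(m_n)$. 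For injectivity, two sequences with equal pointwise ranks are Murray--von Neumann equivalent in $\prod_\bN \bK$ via slotwise partial isometries, whose norms are uniformly bounded by $1$.

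\textbf{Step 3.} For $\K_0(\cQ)$, I would apply the six-term exact sequence to $0 \to \bigoplus \bK \to \prod \bK \to \cQ \to 0$. Under the identifications from Steps 1 and 2, the inclusion $\bigoplus \bK \hookrightarrow \prod \bK$ induces the obvious inclusion $\bigoplus_\bN \bZ \hookrightarrow \prod_\bN \bZ$ on $\K_0$, which is manifestly injective. Combined with $\K_1(\bigoplus \bK) = 0$ from Step 1, the relevant portion of the six-term sequence collapses to
\[ 0 \to {\textstyle \bigoplus_\bN \bZ} \to {\textstyle \prod_\bN \bZ} \to \K_0(\cQ) \to 0, \]
giving the desired quotient.

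\textbf{Main obstacle.} The least standard step is the computation $\K_0(\prod_\bN \bK) \cong \prod_\bN \bZ$, since K-theory does not commute with infinite products in general. The argument succeeds for $\bK$ because every projection in $\bK$ is finite-rank (providing an integer-valued invariant slot by slot) and $\bK$ is stable (so matrix amplification does not leave the category). Some care is needed to ensure that the lifted projections and implementing partial isometries remain norm-bounded uniformly across the product.
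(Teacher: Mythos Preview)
Your proposal is correct and follows essentially the same route as the paper: the paper's proof is only two sentences, asserting that the first two isomorphisms are ``straightforward from the definition of C*-algebra K-theory'' and that the third follows from the six-term exact sequence of the extension $0 \to \bigoplus_\bN \bK \to \prod_\bN \bK \to \cQ \to 0$. Your Steps~1--3 fill in exactly these details, and your observation that $\K_1(\bigoplus_\bN \bK)=0$ together with the injectivity of $\bigoplus_\bN \bZ \hookrightarrow \prod_\bN \bZ$ is precisely what is needed to collapse the six-term sequence into the short exact sequence you write.
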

\begin{proof}
The first and second isomorphisms are straightforward from the definition of C*-algebra K-theory. The third is a consequence of the six-term exact sequence associated to the C*-algebra extension $0 \to \bigoplus_\bN \bK \to \prod_\bN \bK \to \cQ \to 0$. 
\end{proof}
Hereafter, for a sequence $(k_n)_{n \in \bN}$ of integers, we write $(k_n)_{n \in \bN}^\flat$ for its under the quotient $\prod_\bN \bZ \to \prod_\bN \bZ / \bigoplus_\bN \bZ$.

By the above lemma, $\pi^\flat$ induces a group homomorphism
\[\pi^\flat \colon \K_0(C^*\Pi) \to \K_0 (\cQ) \cong \frac{\prod_\bN \bZ }{\bigoplus_\bN  \bZ}_{\textstyle .}\]
In other words,
\begin{align}
     (\pi_n)_{n \in \bN} \mapsto \pi^\flat (\beta) \in  \frac{\prod_\bN \bZ }{\bigoplus_\bN  \bZ} \label{eq:Bott}
\end{align}
is a topological invariant for sequences of $(\varepsilon_n,\cG)$-representations of $\Pi$ with $\varepsilon_n \to 0$. It is trivial if $(\pi_n)_{n \in \bN}$ is homotopic to a representation, i.e., $\pi^\flat$ lifts to a $\ast$-homomorphism $C^*\Pi \to \prod_\bN \bK$ since any $\ast$-homomorphism $C(\hat{\Pi}) \to \bK$ is homotopic to the trivial one. 

Now we apply Proposition \ref{lem:univ} to get an explicit description of the invariant (\ref{eq:Bott}).  

\begin{lem}\label{lem:est}
Let $\lambda$ denote the bottom of the spectrum of $(\hat{D}_W + m\gamma)^2$ and let $0 < C< \lambda$.
Then there is a constant $\varepsilon >0$ such that $\pi(\hat{D}_W + m\gamma)^2>C$ holds for any $(\varepsilon , \cG)$-representation $\pi$ of $\Pi$. 
\end{lem}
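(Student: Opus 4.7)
My plan is to combine a direct commutator estimate with a contradiction argument exploiting the extension $\pi^\flat$ from Subsection~\ref{section:3.3}.

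The first step is to verify that, for every $(\varepsilon,\cG)$-representation $\pi$, the operators $\pi(\hat D_W+m\gamma)^2$ and $\pi((\hat D_W+m\gamma)^2)$ differ in norm by $O(\varepsilon)$. Expanding the square with the Clifford relations $\{\gamma,c(v_j)\}=0$ and $\gamma^2=1$ yields
\begin{equation*}
\pi(\hat D_W+m\gamma)^2 = \pi(\hat D)^2 + (\pi(\hat W)+m)^2 + \gamma[\pi(\hat W),\pi(\hat D)],
\end{equation*}
whereas inside $C^*\Pi\otimes\End(S)$ the same anticommutations together with $[\hat W,\hat D]=0$ (the latter because $\hat W$ is scalar) give $(\hat D_W+m\gamma)^2=\hat D^2+(\hat W+m)^2$. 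Subtracting, the three discrepancies $\pi(\hat D)^2-\pi(\hat D^2)$, $(\pi(\hat W)+m)^2-\pi((\hat W+m)^2)$ and $\gamma[\pi(\hat W),\pi(\hat D)]$ are each finite linear combinations of commutators of the form $[\pi(t_i^{\pm 1}),\pi(t_j^{\pm 1})]$ with $i\neq j$, and each such commutator has norm $<2\varepsilon$ by the quasi-representation condition on $\cG$. Hence $\|\pi(\hat D_W+m\gamma)^2-\pi(B)\|\leq K\varepsilon$ with $B:=(\hat D_W+m\gamma)^2\in\bC[\Pi]\otimes\End(S)$ and $K$ depending only on $B$.

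Next I argue by contradiction. If the lemma failed one could produce a sequence $\pi_n$ of $(\varepsilon_n,\cG)$-representations with $\varepsilon_n\to 0$ and $\pi_n(\hat D_W+m\gamma)^2\not>C$ for every $n$. Feeding this sequence into the construction of Subsection~\ref{section:3.3} produces the $\ast$-homomorphism $\pi^\flat\colon C^*\Pi\otimes\End(S)\to\cM(\cQ)\otimes\End(S)$ whose value on the polynomial $B$ coincides with the class of $(\pi_n(B))_n$ modulo $\bigoplus(\bK\otimes\End(S))$. By Lemma~\ref{lem:invertible} we have $B\geq\lambda$ in the C*-algebra $C^*\Pi\otimes\End(S)$, and since $\ast$-homomorphisms preserve positivity $\pi^\flat(B)\geq\lambda$ in $\cM(\cQ)\otimes\End(S)$. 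The positivity characterization coming from the exact sequence $0\to\bigoplus\bK\to\prod\bK\to\cQ\to 0$ of Lemma~\ref{lem:KofQ} then forces $\pi_n(B)\geq\lambda-o(1)$ as $n\to\infty$; combined with the commutator bound, $\pi_n(\hat D_W+m\gamma)^2\geq\lambda-K\varepsilon_n-o(1)>C$ for all large $n$, contradicting the choice of $\pi_n$.

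The main obstacle is the very last step, namely translating the abstract inequality $\pi^\flat(B)\geq\lambda$ in the quotient algebra into a concrete asymptotic spectral bound on the individual finite-dimensional operators $\pi_n(B)$. This is precisely the work done by the C*-algebraic machinery of Subsection~\ref{section:3.3}; one must also be careful to carry the $\End(S)$-factor through the quotient construction without destroying positivity or compactness.
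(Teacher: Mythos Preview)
Your contradiction argument via $\pi^\flat$ is exactly the route the paper takes, and your proof is correct. Two remarks are worth making.

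First, your Step~1 (the commutator estimate bridging $\pi_n(\hat D_W+m\gamma)^2$ and $\pi_n(B)$) is redundant. The point is that $\pi^\flat(\hat D_W+m\gamma)$ is \emph{by construction} the class of the sequence $\bigl(\pi_n(\hat D_W+m\gamma)\oplus m\gamma|_{\sH_n^\perp}\bigr)_n$, so its square in the quotient is the class of $\bigl(\pi_n(\hat D_W+m\gamma)^2\oplus m^2\bigr)_n$; on the other hand $\pi^\flat$ is a $\ast$-homomorphism, hence $\pi^\flat(\hat D_W+m\gamma)^2=\pi^\flat(B)$. Thus the sequence $\pi_n(\hat D_W+m\gamma)^2$ already lifts $\pi^\flat(B)$ directly, and the spectral argument applies to it without ever passing through $\pi_n(B)$. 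The paper proceeds this way: it simply compares the accumulation points of $\sigma(\pi_n(\hat D_W+m\gamma)^2)$ with $\sigma(\pi^\flat(B))\subset\sigma(B)\subset[\lambda,\infty)$.

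Second, the step you flag as the ``main obstacle'' is not where the content lies, and it does not depend on the $\K$-theory computation of Lemma~\ref{lem:KofQ}. What you need is the elementary fact that for a bounded sequence $(a_n)$ of self-adjoint operators, the spectrum of $(a_n)^\flat$ in $\prod/\bigoplus$ is exactly the set of accumulation points of $\bigcup_n\sigma(a_n)$; equivalently, $(a_n)^\flat\geq\lambda$ forces $\inf\sigma(a_n)\to\lambda$ up to a null sequence. This is a direct consequence of functional calculus (or of the observation that $(a_n-\mu)^{-1}$ is uniformly bounded for $\mu$ outside the accumulation set), and carries the $\End(S)$ factor along for free. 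The paper invokes this in one line as ``a standard fact in the spectral theory of C*-algebras.''
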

Instead of an analytic proof using norm estimates (as is given in Proposition \ref{prp:apriori} later), we prove this lemma by using an abstract spectral theory of C*-algebras. 
\begin{proof}
Let us choose $C'$ with $C < C' < \lambda$ and we show that $\pi_n(\hat{D}_W+m\gamma)^2 \geq C'$ except for finitely many $n$'s. To this end, we assume the contrary, i.e., for any $n \in \bN$ there is a $(\frac{1}{n}, \cG)$-representation $\pi_n$ of $\Pi$ such that $\pi(\hat{D}_W + m\gamma)^2-C'$ is not positive. 
We bundle these $\pi_n$'s to get a $\ast$-homomorphism $\pi^\flat$ as in (\ref{eq:pitilde}). 
Then the spectrum $\sigma (\pi^\flat(\hat{D}_W^2 +m\gamma)^2-C')$ is the set of accumulation points of $\sigma (\pi_n(\hat{D}_W + m\gamma)^2-C')$, and hence it intersects with $[-C',0]$ non-trivially. 
This contradicts with 
\begin{align*}
    \sigma (\pi^\flat(\hat{D}_W + m\gamma )^2-C') &= \sigma (\pi^\flat((\hat{D}_W + m\gamma )^2-C')) \\ 
    &\subset \sigma ((\hat{D}_W + m\gamma )^2-C') \subset [\lambda - C',\infty)
\end{align*} 
which is a standard fact in the spectral theory of C*-algebras.
\end{proof}

This lemma shows that  
\begin{align} 
\bfI(\pi(\hat{D}_W) + m \gamma ) \in \bZ
\end{align}
is a topological obstruction for $\pi  $ to be homotopic to a representation of $\Pi$ in the space of $(\varepsilon , \cG)$-commuting matrices for sufficiently small $\varepsilon >0$. 

\begin{thm}\label{thm:ACM}
Let $\pi_n$ be a sequence of $(\varepsilon _n ,\cG)$-representations of $\Pi$ with $\varepsilon_n \to 0$ such that $\pi_n(\hat{D}_W + m\gamma)$ is invertible for $n \geq n_0$. Then we have
\[ \pi^\flat (\beta) = (\bfI(\pi_n(\hat{D}_W) + m\gamma ) )_{n \geq n_0}^\flat  \in \frac{\prod \bZ}{\bigoplus \bZ}_{\textstyle .}\]
\end{thm}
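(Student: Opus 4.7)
The plan is to apply the induced K-theory map $\pi^\flat$ to the identity $\beta = [\hat{D}_W + m\gamma] - [\gamma]$ of Proposition~\ref{lem:univ} and identify the resulting class in $\K_0(\cQ) \cong \prod_\bN \bZ / \bigoplus_\bN \bZ$ via Lemma~\ref{lem:KofQ}. The hypothesis that $\pi_n(\hat{D}_W) + m\gamma$ is invertible for $n \geq n_0$ (combined with Lemma~\ref{lem:est} for any smaller indices) ensures that $\pi^\flat(\hat{D}_W + m\gamma)$ is a self-adjoint invertible element of the unitization of $\cQ \otimes \End(S)$, so the K-theory classes on the right-hand side are well defined.

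First I would exhibit an explicit lift of $\pi^\flat(\hat{D}_W + m\gamma)$ to $\prod_\bN \bK \otimes \End(S)$, up to the scalar identity coming from the unitization. By the definition $\pi^\flat(t_j) = (\pi_n(t_j) \oplus 1_{\sH_n^\perp})_n^\flat$, every difference $\pi^\flat(\hat{\nabla}_j) = \pi^\flat(t_j) - 1$ vanishes on $\sH_n^\perp$, so the $\hat{D}_W$-part of the lift is the finite-rank operator $\pi_n(\hat{D}_W)$ on $\sH_n \otimes S$. The mass term $m\gamma$ is the one piece contributing an $\sH_n^\perp$-part, namely the constant $m\gamma$ across the whole of $\sH \otimes S$.

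Next I would form the K-theoretic difference with $[\gamma]$. Taking positive spectral projections of the lifts of $\pi^\flat(\hat{D}_W + m\gamma)$ and of $\pi^\flat(m\gamma)$ and subtracting, the $\sH_n^\perp$-contributions cancel at each level $n$, leaving the finite-rank virtual projection
\[ E_{>0}\big(\pi_n(\hat{D}_W) + m\gamma\big) - E_{>0}\big(\gamma|_{\sH_n \otimes S}\big) \]
in the $n$-th slot. Under the isomorphism of Lemma~\ref{lem:KofQ}, this represents
\[ \pi^\flat(\beta) = \big(\dim E_{>0}(\pi_n(\hat{D}_W) + m\gamma) - \dim E_{>0}(\gamma|_{\sH_n \otimes S})\big)_{n \geq n_0}^\flat. \]
Since $\dim E_{>0}(\gamma|_{\sH_n \otimes S}) = \tfrac{1}{2}\dim(\sH_n \otimes S)$, Remark~\ref{rmk:I} identifies the $n$-th entry with $\bfI(\pi_n(\hat{D}_W) + m\gamma)$, yielding the claim.

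The main technical subtlety will be the careful bookkeeping around the unitization of $\cQ$: neither positive spectral projection is a finite-rank perturbation individually, because the constant mass term $m\gamma$ contributes an infinite-rank $\sH_n^\perp$-piece at every level. One must form the difference of projections before lifting to $\prod_\bN \bK \otimes \End(S)$, which is also the conceptual reason why the invariant naturally takes values in the quotient $\prod_\bN \bZ / \bigoplus_\bN \bZ$ rather than in $\prod_\bN \bZ$.
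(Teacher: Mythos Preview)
Your proposal is correct and follows essentially the same route as the paper: apply $\pi^\flat$ to the identity $\beta=[\hat{D}_W+m\gamma]-[\gamma]$ from Proposition~\ref{lem:univ}, lift to $\prod_\bN \bK$, and read off the integers via Lemma~\ref{lem:KofQ} and Remark~\ref{rmk:I}. Your explicit handling of the unitization and the cancellation of the $\sH_n^\perp$-contributions is in fact more careful than the paper's own write-up, which simply asserts the identification $[(h_n)^\flat]\mapsto(\dim E_{>0}(h_n))^\flat$ and suppresses those details.
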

Here we write $(\cdot)_{n \geq n_0}^\flat$ for the image under the compositions $\prod_{n \geq n_0} \bK \to \prod _{n \in \bN} \bK \to \cQ$ or $\prod _{n \geq n_0} \bZ \to \prod _{n \in \bN}\bZ \to \prod _\bN \bZ/ \bigoplus_\bN \bZ$.
\begin{proof}
The identification $\K_0(\cQ) \cong \prod _\bN \bZ/ \bigoplus_\bN \bZ$ is given by mapping $[(p_n)_{n \in \bN}^\flat] \in \K_0(\cQ)$ to $( \rank p_n)_{n \in \bN}^\flat$. Hence, for an invertible element $(h_n)_{n \in \bN}^\flat \in \cQ$, its K-theory class $[(h_n)_{n \in \bN}^\flat] \in \K_0(\cQ)$ corresponds to $( \dim E_{>0}(h_n) )_{n \in \bN}^\flat$.

By Proposition \ref{lem:univ} we have
\begin{align*}
\pi^\flat(\beta)&= [\pi^\flat(\hat{D}_W + m\gamma)] - [\gamma] \\
&=([ ( \pi_n(\hat{D}_W + m\gamma) )_{n \geq n_0}^\flat ] - [\gamma].
\end{align*}
Through the identification as above, the right hand side corresponds to 
\[ ( \dim E_{>0}(\pi_n(\hat{D}_W + m\gamma)) - \dim E_{>0}(\gamma))_{n \in \bN}^\flat = ( \bfI (\pi_n(\hat{D}_W) +m\gamma ))_{n \geq n_0} ^\flat \]
by Reamrk \ref{rmk:I} and Lemma \ref{lem:universality}.
\end{proof}

\begin{rmk}
When $d=2$, this invariant is the same as the one given in  \cite{exelInvariantsAlmostCommuting1991}. Our description using the Wilson--Dirac operator is efficient in computation since there is no need to calculate the functional calculus of a large matrix. 
\end{rmk}

\section{Wilson--Dirac index theorem as almost flat index pairing}\label{section:4}
In this section we give a proof of Theorem \ref{thm:main} as an application of the Hanke-Schick index pairing \cite{hankeEnlargeabilityIndexTheory2006}. 
\subsection{Higher index pairing}
The index pairing is the pairing between the K-theory and K-homology groups of a manifold $M$ (a reference is \cite{higsonAnalyticHomology2000}). It assigns to $[\sE] \in \K^0(M)$ and $ [\sD] \in \K_0(M)$,  where $\sE$ is a complex vector bundle and $\sD$ is an elliptic operator on $M$,  the Fredholm index $\Index \sD^\sE$ of the twisted operator $\sD^\sE$. 
More generally, a K-homology element $[\sD] \in \K_0(M)$ induces the "index pairing with coefficient", i.e., a homomorphism 
\[\langle \blank , [\sD] \rangle_A  \colon \K_0(C(M) \otimes A) \to \K_0(A).\]

This homomorphism is described as following. Let $\sH$ be a Hilbert space equipped with a $\ast$-representation $C(M) \to \bB(\sH)$. A K-homology element in the Atiyah--Kasparov picture (see for example \cite{higsonAnalyticHomology2000}*{Definition 8.1.1}) is represented by a bounded operator $F \in \bB(\sH)$ such that $[F,f], F^*F-1, FF^*-1$ are compact operators for any $f\in C(M)$. 
Our concern is the Dirac fundamental class $[D]$ of $M$ represented by $\sH:=L^2(M,S^+) \cong L^2(M,S^-)$ (this unitary isomorphism is given by a fixed Borel isomorphism of vector bundles $S^+ \cong S^-$) and 
\[ F:=D(1+D^*D)^{-1/2} \colon L^2(M,S^+) \to L^2(M,S^-).\]

Another ingredient of the proof is the Mishchenko bundle $\cL:= \tilde{M} \times_\Pi C^*\Pi$, where $\Pi$ acts on $C^*\Pi$ by the multiplication from the left. This is a flat bundle of Hilbert $C^*\Pi$-modules. Note that the Serre--Swan theorem with coefficient also holds; the group $\K_0(C(M) \otimes A)$ is isomorphic to the Grothendieck group of the semigroup of bundles of finitely generated projective Hilbert $A$-modules (see for example \cite{schickIndexTheoremsKKtheory2005}*{Subsection 3.2}). Therefore, $\cL$ determines an element $[\cL] \in \K_0(C(M) \otimes C^*\Pi)$. 
\begin{rmk}\label{rmk:pair}
Here we enumerate some basic facts on the index pairing and the Mishchenko bundle which will be used in the proof of Theorem \ref{thm:main}. 
\begin{enumerate}
    \item The index pairing with coefficient in $\bC$ is the same thing as the usual index pairing, i.e., $\langle [E], [D] \rangle _\bC =\Index (D^E) \in \K_0(\bC) \cong \bZ$ (see for example \cite{higsonAnalyticHomology2000}*{Proposition 4.8.10 (c)}).
    \item The index pairing is compatible with the base-change: Let $\phi \colon A \to B$ be a $\ast$-homomorphism. Then we have
\begin{align*}
    \langle (\id_{C(M)} \otimes \phi)_* [p], [F] \rangle_B = \phi_*(\langle [p], [F] \rangle_A). 
\end{align*}
    This holds by definition of the index pairing above because $(\id_{\bB(\sH)} \otimes \phi)(F_p)=F_{\phi(p)}$ holds.
    \item The bundle $\cL$ is universal among flat bundles of Hilbert C*-modules: If we have a flat bundle $\cE \to M$ of Hilbert $A$-modules, the monodromy representation $\pi_\cE \colon \Pi \to \bB(E)$ (where $E$ is a typical fiber of $\cE$) is associated. Then, by definition of $\cL$, the base change $\cL \otimes_{\pi_E} E$ is isomorphic to $\cE$ itself. In the level of K-theory, we have 
    \[ (\id_{C(M)} \otimes  \pi_{\cE})_*[\cL] = [\cE].\] 
    \item The higher index pairing $\langle [\cL], [D] \rangle_{C^*\Pi} \in \K_0(C^*\Pi)$ is the Bott element $\beta $. This is essentially proved by Lusztig~\cite{lusztigNovikovHigherSignature1972} by using the Atiyah-Singer family index theorem. Indeed, through the identification $C^*\Pi \cong C(\hat{\Pi})$, the index pairing with $\cL$ is identified with the family index for the fiberwise Dirac operator on the fiber bundle $M \times \hat{\Pi} \to \hat{\Pi}$ twisted by the Poincare line bundle 
    \[\cP:=V \times \hat{\Pi} \times \bC /\{ (v,\chi, \xi) \sim (v+t, \chi, \chi(t)^*\xi)\}.\]
\end{enumerate}
\end{rmk}
\begin{lem}[{\cite{hankeEnlargeabilityIndexTheory2006}*{Theorem 3.8}}]\label{lem:HS}
Let $\pi_N:=\pi_{a,E}$ be the quasi-representation of $\Pi$ defined in Subsection \ref{section:2.3} and let $\pi^\flat$ be the $\ast$-homomorphism as in (\ref{eq:pitilde}). Then we have
\[\pi^\flat(\beta) = ( \Index D^E )_{N \in \bN} ^\flat \in \frac{\prod \bZ}{\bigoplus \bZ}_{\textstyle .} \]
\end{lem}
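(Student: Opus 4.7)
The plan is to interpret $\pi^\flat(\beta)$ as the higher index pairing with the Mishchenko bundle pushed forward along $\pi^\flat$, and then compute this pairing by lifting back to $\prod_\bN \bK$, where it can be evaluated factor-by-factor as an ordinary Dirac index on the almost flat bundles $E_N = q_{a!}\kappa_a^*E$. Lemma \ref{lem:GL} is the bridge: it identifies the almost monodromy of $E_N$ with the quasi-representation $\pi_N = \pi_{a,E}$.

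First, by Remark \ref{rmk:pair}(4) we have $\beta = \langle [\cL], [D] \rangle_{C^*\Pi}$, and the base-change compatibility in Remark \ref{rmk:pair}(2) applied to $\pi^\flat \colon C^*\Pi \to \cQ$ gives
\[ \pi^\flat(\beta) = \langle (\id_{C(M)} \otimes \pi^\flat)_*[\cL], [D] \rangle_{\cQ} \in \K_0(\cQ). \]
Next, using the fixed embeddings $\sH_N \hookrightarrow \sH$ from Section \ref{section:3.3}, I would stabilize each $E_N$ into a Hilbert $\bK$-module bundle on $M$ and assemble the sequence into a Hilbert $(\prod_\bN \bK)$-module bundle $\widetilde{\cL}$ on $M$. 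The estimate $\|R_{E_N}\| = a^2\|R_E\|$ from Subsection \ref{section:2.2} tells us the assembled connection has curvature vanishing in the quotient $\cQ$, so $\widetilde{\cL}$ descends to a genuinely flat Hilbert $\cQ$-module bundle whose monodromy representation is exactly $\pi^\flat$. By the universality of the Mishchenko bundle (Remark \ref{rmk:pair}(3)), the image of $[\widetilde{\cL}]$ in $\K_0(C(M) \otimes \cQ)$ equals $(\id_{C(M)} \otimes \pi^\flat)_*[\cL]$.

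Finally I would compute $\langle [\widetilde{\cL}], [D] \rangle_{\prod_\bN \bK} \in \K_0(\prod_\bN \bK) \cong \prod_\bN \bZ$ by applying Remark \ref{rmk:pair}(2) to each coordinate projection $\prod_\bN \bK \to \bK$, which by Remark \ref{rmk:pair}(1) reduces to the ordinary index pairing on the $N$-th factor and so produces $\Index D^{E_N}$. Pushing forward along $\prod_\bN \bK \to \cQ$ and using the identification $\K_0(\cQ) \cong \prod_\bN \bZ / \bigoplus_\bN \bZ$ of Lemma \ref{lem:KofQ}, together with Lemma \ref{lem:index} which replaces $\Index D^{E_N}$ by $\Index D^E$, yields
\[ \pi^\flat(\beta) = (\Index D^E)^\flat_{N \in \bN} \in \frac{\prod_\bN \bZ}{\bigoplus_\bN \bZ}_{\textstyle .} \]

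The main obstacle is the assembly step: the ranks of the $E_N$ grow without bound, so one must carry out the stabilization via $\sH_N \hookrightarrow \sH$ with care, and then verify that the assembled curvature tensor truly vanishes in $\cQ$ and that the induced monodromy really is $\pi^\flat$. This is precisely where the asymptotic analysis of Hanke--Schick is absorbed, and where the contraction estimate $\|R_{E_N}\| = a^2\|R_E\| \to 0$ of the Gromov--Lawson construction does its essential work.
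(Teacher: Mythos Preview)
Your proposal is correct and follows essentially the same route as the paper: the paper's proof assembles the stabilized bundles $\cE_N := E_N \otimes \sH^*$ into a Hilbert $\prod_\bN \bK$-module bundle $\cE$ (your $\widetilde{\cL}$), observes that its base-change to $\cQ$ is the flat bundle with monodromy $\pi^\flat$, and then runs exactly the chain of equalities you describe using Remark~\ref{rmk:pair} (4), (2), (3), (2) and Lemma~\ref{lem:index}. The ``main obstacle'' you flag is precisely the content absorbed from \cite{hankeEnlargeabilityIndexTheory2006}, and the paper handles it in the same way by citing that the curvature of $\cE$ has norm $\leq a^2\|R_E\|$ so that $\cE \otimes_\varphi \cQ$ is flat with the expected monodromy.
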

\begin{proof}
For the self-consistency of the paper, we quickly review the proof given in \cite{hankeEnlargeabilityIndexTheory2006}. 
For each $N$, $\cE_N:=E_N \otimes \sH^*$ is a bundle of projective Hilbert $\bK(\sH)$-modules which is identified with $E_N$ through the isomorphism $\K_0(C(M)) \cong \K_0(C(M) \otimes \bK(\sH)))$. The connection of $E_N$ extends to that of $\cE_N$ whose curvature has the norm less than $a^2 \| R_E\|$ (for the connection and curvature of Hilbert C*-module bundles, see \cite{schickIndexTheoremsKKtheory2005}*{Section 4}) . Let $\cE:=\prod \cE_N$, which is a Hilbert $\prod \bK$-module. Then we have
\[\langle [\cE ], [D] \rangle_{\prod \bK}  = ( \Index D^{E_N} )_N = ( \Index (D^E) )_N \in \K_0(\prod \bK) \cong \prod \bZ \]
by Lemma \ref{lem:index} and Remark \ref{rmk:pair} (1). Moreover, the base-change $\cE \otimes _\varphi \cQ$ is isomorphic to the flat bundle $V \times_{\pi^\flat} \cQ $ associated to $\pi^\flat$. 

Let $\eta^\flat$ denote the quotient map $\prod \bK \to \cQ$. Now we obtain that
\begin{align*}
\pi^\flat_*(\beta ) =&\pi^\flat_*(\langle [\cL], [D] \rangle_{C^*\Pi }) \\
=& \langle (\id_{C(M)} \otimes \pi^\flat)_*[\cL], [D] \rangle_{\cQ }\\
=&\langle (\id _{C(M)} \otimes \eta^\flat)_*[\cE], [D] \rangle_{\cQ } \\
=& \eta_*^\flat(\langle [\cE], [D] \rangle_{\prod \bK } ) \\
=& ( \Index (D^E) )_{N \in \bN}^\flat \in \frac{\prod \bZ}{\bigoplus \bZ} _{\textstyle .} 
\end{align*}
Here each of the first, second, third and forth equalities follows from (4), (2), (3), (2) of Remark \ref{rmk:pair} respectively. 
\end{proof}

\subsection{Proof of the main theorem}\label{section:4.2}
The proof of Theorem \ref{thm:main} (1) is essentially finished by Lemma \ref{lem:est}, Theorem \ref{thm:ACM} and Lemma \ref{lem:HS}. For the proof of (2), we need a quantitative refinement of Lemma \ref{lem:est}. 

The following norm estimate is communicated to the author by Mikio Furuta. A more conceptual description from the viewpoint of geometric analysis will appear in the forthcoming paper \cite{fukayaAnalyticIndicesLattice}. 
A similar estimate is also studied by Neuberger~\cite{neubergerBoundsWilsonDirac2000}. 
\begin{prp}\label{prp:apriori}
Let $D_{\kappa, m}:= \kappa \pi_{a,E}(\hat{D}_W) + m\gamma$. For any $\kappa \in [m,1/a]$, the square $D_{\kappa, m}^2$ is bounded from below by $m^2- 4 d^2\| R_E\|$.
\end{prp}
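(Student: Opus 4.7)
The plan is to expand $D_{\kappa,m}^2$ algebraically, split off a ``commutative core'' bounded below by $m^2$, and control the remaining pieces by the commutator estimate $\|[U_j^{a,E},U_l^{a,E}]\| \leq a^2\|R_E\|$ that was established in Section~\ref{section:2}. Setting $A := \pi_{a,E}(\hat{D})$ and $B := \pi_{a,E}(\hat{W})$, so that $D_{\kappa,m} = \kappa A + \gamma(\kappa B + m)$, the relations $A\gamma + \gamma A = 0$ and $B\gamma = \gamma B$ give
\[
D_{\kappa,m}^2 = \kappa^2 A^2 + (\kappa B + m)^2 + \kappa^2 \gamma [B,A].
\]
Writing $s_j := (U_j^{a,E} - (U_j^{a,E})^*)/(2i)$ and $c_j := (U_j^{a,E} + (U_j^{a,E})^*)/2 - 1$, the single-generator identity $s_j^2 + (c_j+1)^2 = I$ yields $\sum_j s_j^2 = -\sum_j c_j^2 - 2B$. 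Expanding $A^2 = \sum_j s_j^2 + E_1$ with $E_1 := -\sum_{j<l} c(v_j)c(v_l)[s_j,s_l]$ and substituting rewrites the square as
\[
D_{\kappa,m}^2 = m^2 + 2\kappa(m-\kappa)B + \kappa^2 R + \kappa^2 E_1 + \kappa^2\gamma[B,A], \qquad R := B^2 - \sum_j c_j^2.
\]
Since $\kappa \geq m$ forces $m-\kappa \leq 0$ while $B \leq 0$, the term $2\kappa(m-\kappa)B$ is non-negative and can be discarded.

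The two ``straightforward'' error terms $\kappa^2 E_1$ and $\kappa^2\gamma[B,A]$ are controlled directly: each of $\|[s_j,s_l]\|$ and $\|[c_j,s_l]\|$ is a linear combination of at most four commutators of lattice shifts (and adjoints, whose commutators have the same norm), and hence is bounded by $a^2\|R_E\|$; the bound $\kappa \leq 1/a$ then absorbs the prefactor $\kappa^2 a^2 \leq 1$, giving $\|\kappa^2 E_1\| \leq \binom{d}{2}\|R_E\|$ and $\|\kappa^2 \gamma[B,A]\| \leq d(d-1)\|R_E\|$.

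The main obstacle is bounding $\kappa^2 R$ from below. The naive argument that $R = \sum_{j\neq l}\tilde{c}_j\tilde{c}_l$ (with $\tilde{c}_j := -c_j \geq 0$) is non-negative as a sum of products of commuting positive operators breaks down in the non-commutative setting. To repair it I would use the polar-type factorization $\tilde{c}_j = W_j^* W_j = W_j W_j^*$ with $W_j := (I - U_j^{a,E})/\sqrt{2}$, for which $\|W_j\| \leq \sqrt{2}$ and $\|[W_j, W_l^*]\| = \tfrac{1}{2}\|[U_j^{a,E},(U_l^{a,E})^*]\| \leq \tfrac{1}{2} a^2\|R_E\|$, combined with the elementary identity
\[
\tilde{c}_j \tilde{c}_l = W_l^* \tilde{c}_j W_l + [\tilde{c}_j, W_l^*] W_l.
\]
Summing over $j \neq l$ expresses $R$ as the manifestly non-negative operator $\sum_l W_l^*\bigl(\sum_{j\neq l}\tilde{c}_j\bigr) W_l \geq 0$ plus $d(d-1)$ remainder terms, each bounded via the Leibniz rule $\|[W_j^* W_j, W_l^*]\| \leq 2\|W_j\|\cdot\|[W_j,W_l^*]\|$ by $2a^2\|R_E\|$. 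This yields $R \geq -2d(d-1) a^2\|R_E\|$ and hence $\kappa^2 R \geq -2d(d-1)\|R_E\|$. Collecting the three contributions gives $D_{\kappa,m}^2 \geq m^2 - \tfrac{7}{2}d(d-1)\|R_E\| \geq m^2 - 4d^2\|R_E\|$, as claimed.
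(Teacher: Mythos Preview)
Your proof is correct and follows essentially the same route as the paper: expand the square, use the unitary identity $s_j^2+(c_j+1)^2=1$ to collapse the diagonal into $2\kappa(\kappa-m)\sum_j(1-y_j)\ge 0$, and then factor $1-y_j=W_j^*W_j$ (your $W_j$ is the paper's $-z_j/\sqrt{2}$) to commute the cross-term $\sum_{j\neq l}(1-y_j)(1-y_l)$ into a manifestly positive form plus commutator errors bounded via $\|[U_j^{a,E},U_l^{a,E}]\|\le a^2\|R_E\|$ and $\kappa a\le 1$. Your bookkeeping even yields the slightly sharper constant $\tfrac{7}{2}d(d-1)$ in place of the paper's $4d^2$.
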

\begin{proof}
In the proof, we write $a \sim_\varepsilon b$ for $\| a - b\| < \varepsilon$. 
Set 
\begin{align*}
     x_j&:=\frac{1}{2i}(U_j^{a,E}- (U_j^{a,E})^*), \\
     y_j&:=\frac{1}{2}(U_j^{a,E}+ (U_j^{a,E})^*), \\
     z_j&:=U_j^{a,E}-1.
\end{align*}
Note that we have $1-y_j = \frac{1}{2}z_jz_j^*$ and $\| z_j\| \leq 2$. Now the square $D_{\kappa,m}^2$ is bounded from below as
\begin{align*}
    D_{\kappa, m}^2 &= \big( \kappa \sum_j c_j \cdot i x_j + \gamma \big( \kappa \sum_j (y_j-1) + m \big) \big)^2\\
    &\sim _{\varepsilon_1} \kappa^2 \sum_j x_j^2 + \big( \kappa \sum_j (y_j-1) +m\big)^2 \\
    &=\kappa^2 \sum_j x_j^2+ \kappa^2\sum_j (y_j-1)^2 + \sum_{j \neq l} (1-y_j)(1-y_l)\\
    & \ \ \  \ + 2\kappa m \sum_j (y_j-1) + m^2\\
    &= 2\kappa (\kappa -m) \sum_j (1-y_j) + \kappa ^2 \sum_{j \neq l}(1-y_j)(1-y_l) + m^2\\
    &\sim_{\varepsilon_2} \kappa (\kappa -m) \sum_j z_j^*z_j + \frac{\kappa ^2}{4}\sum_{j \neq l}z_jz_j^*z_l^*z_l + m^2 \geq m^2,
\end{align*}
where $\varepsilon_1>0$ and $\varepsilon_2>0$ are defined as 
\begin{align*}
     &\Big\| -\sum_{j < l} \kappa^2c_jc_l[x_j,x_l] +  \sum_{j} c_j\gamma  \Big[ \kappa \cdot i x_j,\big( \kappa \sum_l (y_l-1) + m \big) \Big] \Big\|\\
     \leq & \kappa ^2\sum_{j,l} (\| [x_j,x_l]\| + \| [x_j,y_l\|) \leq \kappa^2 \cdot 2d^2 \| R_E\|a^2  \leq 2d^2  \| R_E\| =:\varepsilon_1
\end{align*}
and
\begin{align*}
     &\frac{\kappa ^2}{4} \Big\| \sum_{j \neq l} (z_j[z_j^*,z_l]z_l^* + z_jz_l[z_j^*,z_l^*]) \Big\| \leq 2d^2 \kappa ^2 \| R_E\|a^2  \leq 2d^2  \| R_E\| =:\varepsilon_2.
\end{align*}

This inequality implies that the spectrum of $D_{\kappa, m}^2$ is bounded below by $m^2 - 4 d^2  \| R_E\| $ if $m \leq  \kappa \leq 1/a$. 
\end{proof}
As a consequence of Proposition \ref{prp:apriori}, we obtain that the self-adjoint matrices $\pi_{a,E}(m\hat{D}_{W})+m\gamma$ and $D_W^{a,E} +m\gamma$ are homotopic in the space of invertible self-adjoint matrices for any $m>4d^2\| R_E\|$. In particular we obtain
\begin{align}
    \bfI (\pi_{a,E}(\hat{D}_{W}) + \gamma )  = \bfI (D^{a,E}_W + m\gamma).  \label{eq:mass}
\end{align}

\begin{proof}[Proof of Theorem \ref{thm:main}]
By (\ref{eq:mass}), it suffices to show that 
\[ \bfI  (\pi_{a,E}(\hat{D}_W) + \gamma ) = \Index (D^E) \]
for sufficiently small $a>0$.
This follows from Theorem \ref{thm:ACM} and Lemma \ref{lem:HS}.
\end{proof}

\section{Generalizations}\label{section:4.3}
In this section we discuss two generalizations of Theorem \ref{thm:main}, the family version and the real and Clifford equivariant version. They are also considered by Adams~\cites{adamsFamiliesIndexTheory2002} and Fukaya et.~al.~\cite{fukayaAnalyticIndicesLattice} respectively. 
Since our method is $\K$-theoretic and does not rely on any evaluation such as integration of differential forms, the same proof also works for these generalized setting.  
\subsection{Family index}
Let $X$ be a compact space and let $E $ be a vector bundle over $M \times X$, which is thought of as a family of vector bundles $E_x :=E|_{M \times \{ x\}}$ parametrized by $X$. 
Let $\{ \nabla_x \}_{x \in X}$ be a smooth family of hermitian connections on $E_x$.
Then the fiberwise Dirac operator $\fD^E(x):= D^{E_x}$ is a continuous function on $X$ taking value in Fredholm operators. Its family index determines an element of the topological K-group $\K^0(X)$. 

Also, the Gromov--Lawson construction as in Section \ref{section:2} provides a continuous family $\pi_{a,E,X}:=\{ \pi_{a,E_x} \}_{x \in X}$ of quasi-representations of $\Pi$ parametrized by $X$. It gives rise to a $\ast$-homomorphism $\pi^\flat_X \colon C^*\Pi  \to \cQ \otimes C(X)$. Note that the K-group $\K_0(\cQ \otimes C(X))$ is isomorphic to $\prod \K^0(X) / \bigoplus \K^0(X)$. 
The family Wilson--Dirac operator $\fD_W^{a,E} :=\pi^\flat_X(\hat{D}_W)$ is a continuous function from $X$ to the space of self-adjoint matrices such that $\fD_W^{a,E} +m\gamma$ is invertible. 
Then $E_{>0}(\fD_W^{a,E}+m\gamma )$ is a vector bundle on $X$, and 
\[ \bfI_X(\fD_{W}^{a,E}+m\gamma ):=[E_{>0}(\fD_W^{a,E}+\textstyle { \frac{m}{a}}\gamma ) ]- [E_{>0}(\gamma)] \in \K^0(X) \]
is defined.  

Now the same proof shows that both Theorem \ref{thm:ACM} and Lemma \ref{lem:HS} hold for this family version. Consequently we obtain the following generalization. 
\begin{thm}
The following hold.
\begin{enumerate}
    \item For any $0<m<2$, there is $a_0=1/N_0$ such that $\fD_W^{a,E} + \frac{m}{a}$ is invertible and $\bfI_X (\fD_W^{a,E} +\frac{m}{a}\gamma ) = \Index_X (\fD^E)$ holds for any $0 < a < a_0$. 
    \item There is $m_0>0$ such that, for any $m>m_0$ there is $a_0=1/{N_0}$ such that $\fD_W^{a,E} + m\gamma$ is invertible and $\bfI_X (\fD_W^{a,E} +m\gamma) = \Index_X (\fD^E)$ holds for any $0<a<a_0$.
\end{enumerate}
\end{thm}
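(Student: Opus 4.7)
The plan is to run the proof of Theorem \ref{thm:main} verbatim with $\bC$-coefficients upgraded to $C(X)$-coefficients, exploiting compactness of $X$ for the required uniform estimates. Each of the three key ingredients---the spectral estimate of Proposition \ref{prp:apriori}, the K-theoretic identification of Theorem \ref{thm:ACM}, and the Hanke--Schick identity of Lemma \ref{lem:HS}---admits a direct family analogue.

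For invertibility, I would note that the curvature norm $x \mapsto \|R_{E_x}\|$ is continuous on the compact space $X$ and hence uniformly bounded. Applying Proposition \ref{prp:apriori} and Lemma \ref{lem:est} fiberwise with this uniform bound gives invertibility of $\fD_W^{a,E}(x) + m\gamma$ (resp.\ $+ (m/a)\gamma$) simultaneously at every $x$, provided $m$ is sufficiently large (resp.\ $0<m<2$ and $a$ is small enough). Hence $E_{>0}(\fD_W^{a,E} + m\gamma)$ is a genuine continuous vector bundle over $X$, so that $\bfI_X(\fD_W^{a,E} + m\gamma) \in \K^0(X)$ is well defined.

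The K-theoretic identification then proceeds in two steps. The $\ast$-homomorphism $\pi^\flat_X \colon C^*\Pi \to \cQ \otimes C(X)$ induces a map on K-theory with target $\K_0(\cQ \otimes C(X)) \cong \prod \K^0(X)/\bigoplus \K^0(X)$, where the latter isomorphism follows from the six-term sequence applied to $0 \to \bigoplus(\bK \otimes C(X)) \to \prod(\bK \otimes C(X)) \to \cQ \otimes C(X) \to 0$ together with nuclearity of $C(X)$. The purely K-theoretic argument of Theorem \ref{thm:ACM}---that the K-class of an invertible self-adjoint element over the parameter space $X$ is the formal difference of its positive and negative spectral bundles---then identifies $(\pi^\flat_X)_*(\beta)$ with $(\bfI_X(\fD_W^{a,E} + m\gamma))_{N \geq N_0}^\flat$. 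On the other hand, replaying the Hanke--Schick proof of Lemma \ref{lem:HS} with $\bK$ replaced by $\bK \otimes C(X)$ throughout---using that the Mishchenko identity $\langle [\cL],[D]\rangle_{C^*\Pi} = \beta$ is unchanged, that the bundles $\cE_N := E_N \otimes \sH^*$ assemble into a Hilbert $\prod(\bK \otimes C(X))$-module bundle with index pairing $(\Index_X \fD^E)_N$ by the $C(X)$-linear version of Lemma \ref{lem:index} and Remark \ref{rmk:pair}(1), and that base-change along $\pi^\flat_X$ commutes with the index pairing---yields $(\pi^\flat_X)_*(\beta) = (\Index_X \fD^E)_N^\flat$. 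Equating the two expressions forces $\bfI_X(\fD_W^{a,E} + m\gamma) = \Index_X \fD^E$ in $\K^0(X)$ componentwise for all sufficiently small $a$.

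The main obstacle I anticipate is technical rather than conceptual: confirming that the relevant infinite products $\prod(\bK \otimes C(X))$ and the quotient $\cQ \otimes C(X)$ behave well K-theoretically (i.e.\ that the identification $\K_0(\cQ \otimes C(X)) \cong \prod \K^0(X)/\bigoplus \K^0(X)$ holds), and that the $C(X)$-linear index pairing satisfies the functorial compatibilities listed in Remark \ref{rmk:pair}. Once these are in place the proof mirrors that of Theorem \ref{thm:main} line by line.
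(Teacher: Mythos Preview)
Your proposal is correct and matches the paper's approach essentially verbatim: the paper simply asserts that ``the same proof shows that both Theorem \ref{thm:ACM} and Lemma \ref{lem:HS} hold for this family version'' after introducing $\pi^\flat_X \colon C^*\Pi \to \cQ \otimes C(X)$ and the identification $\K_0(\cQ \otimes C(X)) \cong \prod \K^0(X)/\bigoplus \K^0(X)$, which is precisely the scaffolding you describe. Your explicit flagging of the technical point about the K-theory of $\cQ \otimes C(X)$ is appropriate---the paper states this isomorphism without justification---but otherwise you have reconstructed the intended argument.
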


\subsection{Real and Clifford-equivariant index}
Here we consider the index theorem for real vector bundles on the torus $M$ with an arbitrary dimension. Throughout this subsection $\Cl_{p,q}$ denotes the Clifford algebra generated by $e_1, \cdots, e_p$ and $f_1, \cdots, f_q$ with $e_j^2=1$ and $f_j^2=-1$. 
Let $M$ be the $d$-dimensional standard torus and let $\cS $ be the unique irreducible representation of the Clifford algebra $\Cl(V \oplus -V) \cong \Cl_{d,d}$.  
The $\Cl_{d,0}$-Dirac operator 
\[\cD:=\sum _{j=1}^n f_j \nabla_{v_j} \colon \Gamma(M,\cS ) \to \Gamma(M,\cS) \]
is an odd self-adjoint operator anticommuting with the Clifford generators $e_j$ of $\Cl_d$. We write $\cD^E$ for the corresponding twisted $\Cl_{d,0}$-Dirac operator. 
In a similar way as in \cite{atiyahIndexTheorySkewadjoint1969}*{Section 5}, its Clifford index is defined as
\[ \Index_{\Cl_{d,0}} (\cD^E):= [\ker \cD^E, \gamma ] \in \hat{\fM}_d / i(\hat{\fM}_{d+1}) \cong \KO_{d}. \]
Here $\hat{\fM}_d$ denotes the set of pairs $(W,h)$, where $W$ is a finite dimensional representation of $\Cl_{d,0}$ and $h \in \End(W)$ is a $\bZ_2$-grading on $W$ anticommuting with Clifford generators of $\Cl_{d,0}$, and let $i \colon \hat{\fM}_{d+1} \to \hat{\fM}_{d}$ denotes the map forgetting the action of $d+1$-th Clifford generator. 
Note that the real $\K$-group $\KO_{d}$ is isomorphic to one of $\bZ$, $\bZ_2$ or $0$. 

On the other hand, the quasi-representation $\pi_{a,E}$ obtained by the Gromov--Lawson construction for $E$ is also real, i.e., each $\pi_{a,E}(t_j)$ is a real orthogonal matrix. 
In the same way as (\ref{eq:Dirac}), we define the hermitian $\Cl_{d,0}$-Wilson--Dirac operator $\cD_\cW^{a,E}$ as
\[\cD_\cW^{a,E} = \cD^{a,E} + \gamma \cW^{a,E} \in \bB(\sH_{a,E}) \]
where $\sH_{a,E}:=\bigoplus_{x \in a\Pi/\Pi} E_{x} \hotimes \cS $. 
Then the massive hermitian $\Cl_{d,0}$-Wilson--Dirac operator $\cD_\cW^{a,E}+m\gamma $ is an invertible self-adjoint operator anticommuting with Clifford generators $e_j$.
Hence it determines an element
\[\bfI_{\Cl_{d,0}}(\cD_\cW^{a,E}+m\gamma):=[\sH_{a,E}, h_{a,E} ] - [\sH_{a,E} , \gamma] \in \hat{\fM}_{d}/i_*\hat{\fM}_{d+1}, \]
where $h_{a,E} =(\cD_\cW^{a,E} +m\gamma)/|\cD_\cW^{a,E} +m\gamma| $.

We will extend the argument given in Section \ref{section:3} and \ref{section:4} to this setting, we introduce a generalization of Karoubi's definition of the real $\K$-group to Real C*-algebras (cf.\ \cite{kubotaNotesTwistedEquivariant2016}*{Corollary 5.15}). Here we say that a Real C*-algebra is a C*-algebra $A$ equipped with an antilinear $\ast$-isomorphic involution $a \mapsto \overline{a}$. 
For example, let $X$ be a compact Real space, i.e., $X$ is a compact space equipped with an involution $\tau \colon X \to X$. Then $A=C(X)$ with $\bar{f}(x)=\overline{f(\tau (x))}$ is a Real C*-algebra. 
Let $\Delta_d$ denote the direct sum of all $\bZ_2$-graded irreducible representation of $\Cl_{d,0}$ with the $\bZ_2$-grading $\gamma$. 
\begin{defn}
Let $A$ be a unital Real C*-algebra. Let us define the set
\[\sF_n^{d}(A):=\{ s \in A \hotimes \bK(\Delta_d^n) \mid s=s^*, \bar{s}=s, s^2=1, e_js=-se_j \text{ for $j=1, \cdots, d$} \}. \]
We define the Real $\K$-group $\KR_d(A)$ as the set of homotopy classes of $\bigcup _n\sF_n^d(A)$, where $\sF_n^d(A) \subset \sF_{n+1}^d(A)$ is defined by $s \mapsto s \oplus \gamma $. The summation is given by the direct sum and the zero element is represented by $\gamma $.
\end{defn}

\begin{rmk}\label{rmk:KR}
We shortly give some remarks on the above definition of the $\KR$-group.
\begin{enumerate}
\item When $d=0$, $\Delta_d$ is the $\bZ_2$-graded $\bR$-vector space $\bR \oplus \bR^{\mathrm{op}}$. The map $[s] \mapsto [(s+1)/2]-[(\gamma +1)/2]$ gives an isomorphism of $\KR_0(A)$ and the Grothendieck group of Real (i.e., invariant under the involution) projections in $\bigcup_{n}\bM_n(A)$.
\item For a Real space $X$, the group $\KR_d(C(X))$ is isomorphic to the Real K-group $\KR^{-d}(X)$ defined in \cite{karoubiTheory2008}*{Exercise III.7.14}. In particular, when $A=\bR$, the group $\KR_{d}(\bR)$ is isomorphic to $\KO_{d}$ by mapping $[s]$ to $[\Delta_d^n , s]$. 
\item The Real K-theory extends to the Kasparov theory. In particular, the index pairing with the Real K-homology cycle $[\cD] \in \KR^{-d}(C(M))$ induces a map
\[\langle {\cdot},  [\cD] \rangle_A \colon \KR_0(C(M) \otimes A) \to \KR_d(A). \]
This index pairing also satisfies Remark \ref{rmk:pair} (1), (2) and (3) by assuming $\phi$ and $\cE$ to be Real.
\end{enumerate}
\end{rmk}
If $\sH$ is a real Hilbert space, then $\bK:=\bK(\sH \otimes _\bR \bC)$ is equipped with a canonical Real C*-algebra structure. In the same way as Subsection \ref{section:3.3} we define the Real C*-algebra $\cQ:=\prod_\bN \bK / \bigoplus_\bN \bK$. Then $(\pi_{a,E})_{N \in \bN}$ gives rise to a Real $\ast$-homomorphism $\pi^\flat \colon C^*\Pi \to \cQ$, where the Real structure on the group C*-algebra $C^*\Pi$ determined by $\bar{t_j} = t_j$. Note that it is checked in the same way as Lemma \ref{lem:KofQ} that $\KR_d(\cQ)$ is isomorphic to $\prod \KO_{d} / \bigoplus \KO_{d}$. 

The statement analogous to Lemma \ref{lem:Bott} and Remark \ref{rmk:pair} (4) holds for this Real setting. Note that  is identified with that on $C(\hat{\Pi})$ induced from the involution $\tau(\chi) = -\chi$. We write $\bR^{p,q}$ for the Real space $\bR^p \oplus i\bR^q$, $\bD^{p,q}$  for the unit disk of $\bR^{p,q}$ and $S^{p,q}$ for the unit sphere of $\bR^{p,q}$. 
\begin{lem}\label{lem:real}
The following hold.
\begin{enumerate}
    \item The function $h:=\sum f_j \cdot i x_j + \gamma x_0 \colon S^{1,d} \to \End(\cS)$ is self-adjoint, invertible, Real and anticommutes with the Clifford generators $e_j$. Then $[h] =\beta \in \KR_{d}(C(S^{1,d}))$.
    \item The Mishchenko bundle $\cL:=\tilde{M} \times_\Pi C^*\Pi$ is a Real bundle of finitely generated Hilbert $C^*\Pi$-modules on $M$. Moreover,  $\langle [\cL], [\cD] \rangle_{C^*\Pi} = \beta \in \KR^d(\hat{\Pi})$. 
\end{enumerate}
\end{lem}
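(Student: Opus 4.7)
\textbf{Proof plan for Lemma \ref{lem:real}.} For part (1), I verify the four algebraic properties by direct computation in $\Cl_{d,d}$: self-adjointness follows from $f_j$ being skew-adjoint (so $if_j$ is self-adjoint) together with $\gamma = \gamma^*$; the identity $h^2 = x_0^2 + \sum_j x_j^2 = 1$ on $S^{1,d}$ uses $\{f_j, f_k\} = -2\delta_{jk}$, $\gamma^2 = 1$, and $\{\gamma, f_j\} = 0$ to kill the cross terms; the anticommutation $e_k h = -h e_k$ uses $\{e_k, f_j\} = 0 = \{e_k, \gamma\}$; and Reality follows because in the natural Real structure on $\End(\cS)$ induced from $\Cl_{d,d}$ the generators $f_j$ and $\gamma$ are Real while $i$ is antilinear, so $\overline{i f_j x_j} = -i f_j x_j$ matches the involution $\tau(x_j) = -x_j$ for $j \geq 1$ on $S^{1,d}$. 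To identify $[h]$ with $\beta$, I invoke that $h$ is precisely the Atiyah--Bott--Shapiro Clifford-multiplication generator implementing Real Bott periodicity: $h$ is the restriction to $S^{1,d}$ of the Real Thom class of $\bR^{1,d}$ relative to the spinor module $\cS$, as in Atiyah's original KR paper.

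For part (2), I proceed in two steps. \emph{Step 1: Real structure on $\cL$.} The involution $\overline{t_j} = t_j$ on $C^*\Pi$ satisfies $\overline{t \cdot s} = t \cdot \bar s$ for all $t \in \Pi$ and $s \in C^*\Pi$, so the left $\Pi$-action commutes with conjugation. Combined with the Real structure on $M$ making $\cD$ Real (which lifts to a correspondingly compatible involution on $\tilde{M} = V$), the map $(v, s) \mapsto (v, \bar s)$ descends to a well-defined antilinear involution on $\cL = V \times_\Pi C^*\Pi$ compatible with its right $C^*\Pi$-module structure. Under Gelfand duality, $\cL$ is identified with the Poincar\'e line bundle $\cP \to M \times \hat\Pi$ equipped with its canonical Real structure, where $\hat\Pi$ carries the involution $\chi \mapsto -\chi$ dual to $\overline{t_j} = t_j$.

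\emph{Step 2: the pairing.} By the Real analog of Remark \ref{rmk:pair}(4) (cf.\ Remark \ref{rmk:KR}(3)), the pairing $\langle [\cL], [\cD]\rangle_{C^*\Pi}$ is identified via $\cL \leftrightarrow \cP$ with the Real family index of the fiberwise $\Cl_{d,0}$-Dirac operator on $M \times \hat\Pi \to \hat\Pi$ twisted by $\cP$. The main obstacle will be to identify this Real family index with $\beta \in \KR^d(\hat\Pi)$. I expect this to follow from a Real version of the Atiyah--Singer family index theorem (as in Atiyah's KR paper or Karoubi's book) applied to $\cP$, using that the Chern character of $\cP$ integrates fiberwise to the fundamental class of $\hat\Pi$; alternatively, a direct homotopy reduces it to the local Clifford-symbol computation of part (1). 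Tracking the Real structure through each ingredient --- $\cD$, $\cP$, and the KK-theoretic index pushforward --- is routine but constitutes the bulk of the technical work.
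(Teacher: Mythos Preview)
Your algebraic verifications in part (1) are fine, and invoking the Atiyah--Bott--Shapiro/Thom class description of $\beta$ is a legitimate way to identify $[h]$. The paper instead reduces part (1) to the complex case already handled in Lemma~\ref{lem:Bott}: it observes that the forgetful map $\KR_d(C(S^{1,d})) \to \K_d(C(S^d))$ sends $[h]$ to the complex Bott element, and that this forgetful map is an isomorphism in the relevant degree. This is shorter because the complex computation has already been done, whereas your route requires unpacking the Real ABS construction from scratch.

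For part (2) the two approaches diverge more substantially. You propose to compute the Real family index of the Poincar\'e-twisted fiberwise Dirac operator directly, via a Real family index theorem or a homotopy to part~(1); you acknowledge this as ``the bulk of the technical work'' and leave it as an expectation. The paper avoids this computation entirely. It cites a stable splitting result (from \cite{mathaiTDualitySimplifiesBulk2016}) showing that $\langle [\cL], \cdot \rangle_{C^*\Pi}$ respects the decompositions $\KO_d(M) \cong \bigoplus_l \KO_d(\bD^{l,0},S^{l,0})^{\binom{d}{l}}$ and $\KR^{-d}(\hat\Pi) \cong \bigoplus_l \KR^{-d}(\bD^{0,l},S^{0,l})^{\binom{d}{l}}$; since $[\cD]$ sits in the top-degree summand, the pairing must land in the top-degree summand on the other side, hence equals $k\beta$ for some integer $k$. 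Then $k=1$ is read off by applying the forgetful map to complex $\K$-theory and invoking Remark~\ref{rmk:pair}(4). This two-step ``constrain the target, then compare with the complex case'' argument sidesteps any genuine Real index computation, which is exactly the step you flagged as the main obstacle in your plan.
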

Note that Lemma \ref{lem:real} (1) is consistent with Lemma \ref{lem:Bott} because $[h]$ in Lemma \ref{lem:real} corresponds to the element written as $[h]-[\gamma]$ in Lemma \ref{lem:Bott} through the isomorphism in Remark \ref{rmk:KR} (1).
\begin{proof}
The proof of Lemma \ref{lem:Bott} shows that the map $ \KR_d(C(S^{1,d})) \to \K_d(C(S^d))$ forgetting the Real involution on $C(S^{1,d})$ sends $[h]$ to the complex Bott generator $\beta \in \K_d(C(S^d))$. This shows (1) since the above forgetful map in this degree is an isomorphism.

Next we show (2). As is stated in \cite{mathaiTDualitySimplifiesBulk2016}*{Section 6}, particularly the equation (6.8), the index pairing $\langle [\cL], {\cdot } \rangle _{C^*\Pi} $ respects the stable splittings $\KO_d(M) \cong \bigoplus_{l} \KO_{d}(\bD^{l,0} , S^{l,0})^{\binom d l}$ and $\KR^{-d}(\hat{\Pi} ) \cong \bigoplus \KR^{-d}(\bD^{0,l}, S^{0,l})^{\binom d l}$. In particular, it sends $[\cD]$ lying in the top degree summand of $\KO_d(M)$ to the top degree summand of $\KR^{-d}(\hat{\Pi})$. This shows that $\langle [\cL],[\cD] \rangle _{C^*\Pi} = k\beta $ for some $k \in \bZ$. To see that $k=1$, consider the image of $\langle [\cL],[\cD] \rangle _{C^*\Pi}$ to the complex K-group by the forgetful map and remind Remark \ref{rmk:pair} (4).
\end{proof}

Remark \ref{rmk:KR} and Lemma \ref{lem:real} are enough to check that the same argument as in Sections \ref{section:3} and \ref{section:4} also work in this setting. Finally we obtain the following theorem. 

\begin{thm}\label{thm:real}
The following hold.
\begin{enumerate}
    \item For any $0<m<2$, there is $a_0=1/N_0$ such that $\cD_\cW^{a,E} + \frac{m}{a}$ is invertible and $\bfI_{\Cl_{d,0}} (\cD_\cW^{a,E } +\frac{m}{a}\gamma ) = \Index_{\Cl_{d,0}} (\cD^E )$ holds for any $0 < a < a_0$. 
    \item There is $m_0>0$ such that, for any $m>m_0$ there is $a_0=1/{N_0}$ such that $\cD_\cW^{a,E} + m\gamma$ is invertible and $\bfI_{\Cl_{d,0}} (\cD_\cW^{a,E } +m\gamma) = \Index_{\Cl_{d,0}} (\cD^E )$ holds for any $0<a<a_0$.
\end{enumerate}
\end{thm}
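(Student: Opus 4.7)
The plan is to reproduce the proof of Theorem \ref{thm:main} step by step, upgrading each ingredient to its Real and $\Cl_{d,0}$-equivariant analog. Concretely, I would first introduce the universal Real hermitian $\Cl_{d,0}$-Wilson--Dirac operator $\hat{\cD}_\cW := \hat{\cD} + \gamma \hat{\cW} \in C^*\Pi \hotimes \End(\cS)$ by copying Definition \ref{defn:WD} with $c(v_j)$ replaced by $f_j$. The algebra $C^*\Pi$ is Real via $\overline{t_j}=t_j$, $\End(\cS)$ carries its natural Real structure, and the Clifford generators $e_j$ act by left multiplication on $\cS$, anticommuting with $\hat{\cD}_\cW + m\gamma$. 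Transporting across $C^*\Pi \cong C(\hat{\Pi})$, the spectral calculation of Lemma \ref{lem:invertible} is unchanged and shows that $\hat{\cD}_\cW + m\gamma$ is invertible for $0<m<2$. The element is therefore a class in $\KR_d(C^*\Pi)\cong \KR^{-d}(\hat{\Pi})$, and the linear-algebraic homotopy constructed in the proof of Proposition \ref{lem:univ} (which is manifestly Real) deforms it, up to the base class of $\gamma$, to the pullback of the universal element $h$ along a Real degree-one map $F\colon \hat{\Pi}\to S^{1,d}$. Lemma \ref{lem:real}(1) then identifies this class as the Bott element $\beta$.

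Next I would carry out the Real analogs of Theorem \ref{thm:ACM} and Lemma \ref{lem:HS}. One takes $\bK=\bK(\sH\otimes_\bR \bC)$ with its canonical Real involution and $\cQ=\prod_\bN \bK/\bigoplus_\bN \bK$; the six-term sequence of Lemma \ref{lem:KofQ} computes $\KR_d(\cQ)\cong \prod \KO_d/\bigoplus \KO_d$. Since the parallel transports of a Real connection preserve the Real structure, $\pi_{a,E}$ is a Real quasi-representation, so the assembled map $\pi^\flat\colon C^*\Pi \to \cQ$ of Subsection \ref{section:3.3} is a Real $\ast$-homomorphism; the abstract spectral-theoretic argument of Lemma \ref{lem:est} is Real-blind and applies verbatim, so $\pi_{a,E}(\hat{\cD}_\cW)+m\gamma$ is invertible for $a$ small. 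Chasing $\beta$ under the above identification of $\KR_d(\cQ)$ then yields
\[ \pi^\flat(\beta) = \bigl(\bfI_{\Cl_{d,0}}(\pi_{a,E}(\hat{\cD}_\cW)+m\gamma)\bigr)_{N\in\bN}^\flat. \]
For the other identification of $\pi^\flat(\beta)$, Lemma \ref{lem:HS} goes through in the Real and $\Cl_{d,0}$-equivariant KK-framework: the Mishchenko bundle $\cL$ is Real, $\cE_N := E_N\hotimes \sH^*$ is a Real Hilbert $\bK$-module bundle, the Dirac index computation $\langle [\cE_N],[\cD] \rangle = \Index_{\Cl_{d,0}}(\cD^{E_N}) = \Index_{\Cl_{d,0}}(\cD^E)$ follows from the Real analog of Lemma \ref{lem:index}, and the four equalities of the Hanke--Schick chain correspond to items (4), (2), (3), (2) of Remark \ref{rmk:pair} in the Real form given in Remark \ref{rmk:KR}(3), with the crucial input $\langle [\cL],[\cD]\rangle_{C^*\Pi}=\beta$ supplied by Lemma \ref{lem:real}(2). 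Combining the two expressions for $\pi^\flat(\beta)$ proves part (1).

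For part (2), I would invoke the a priori estimate of Proposition \ref{prp:apriori}, whose proof consists of commutator and norm manipulations and hence is insensitive to the Real or Clifford structure. The resulting straight-line homotopy between $\pi_{a,E}(\hat{\cD}_\cW)+m\gamma$ and $\cD_\cW^{a,E}+m\gamma$ (valid for $m>4d^2\|R_E\|$) stays in the space $\bigcup_n \sF^d_n(\bM)$ of self-adjoint, invertible, Real elements anticommuting with the $e_j$'s, so both endpoints represent the same class in $\KR_d$, yielding $\bfI_{\Cl_{d,0}}(\pi_{a,E}(\hat{\cD}_\cW)+m\gamma)=\bfI_{\Cl_{d,0}}(\cD_\cW^{a,E}+m\gamma)$, and part (1) concludes. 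The main obstacle is bookkeeping rather than substance: one must confirm that the Real index pairing with coefficients in a Real C*-algebra satisfies the base-change and universality properties of Remark \ref{rmk:pair}(2)--(3); these are standard once the Real KK-setup of Remark \ref{rmk:KR}(3) is in place, but are not explicitly verified in the excerpt.
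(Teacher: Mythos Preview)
Your proposal is correct and follows exactly the route the paper takes: the paper's proof of Theorem \ref{thm:real} is literally the sentence ``Remark \ref{rmk:KR} and Lemma \ref{lem:real} are enough to check that the same argument as in Sections \ref{section:3} and \ref{section:4} also work in this setting,'' and you have spelled out precisely those parallel steps. One small imprecision: in part (2) the homotopy $D_{\kappa,m}=\kappa\,\pi_{a,E}(\hat{\cD}_\cW)+m\gamma$ from Proposition \ref{prp:apriori} runs over $\kappa\in[m,1/a]$, so its left endpoint is (up to positive scaling) $\pi_{a,E}(\hat{\cD}_\cW)+\gamma$ with mass $1\in(0,2)$ rather than mass $m$, which is what lets you feed it into part (1).
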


\begin{rmk}
The index pairing with a quaternionic vector bundle, instead of a real vector bundle, is also treated in a similar way. 
In this case, $\Index _{\Cl_{d,0}}(\cD^E)$ and $\bfI_{\Cl_{d,0}}(\cD^{a,E}_\cW +m\gamma)$ take value in the group of quaternionic representations of the Clifford algebra $\Cl_{d,0}$, which is isomorphic to $\KO_{d+4}$.

Indeed, let us identify the quaternion field $\bH$ with its complexification $\bH \otimes _\bR \bC$ as the Real C*-algebra. It is isomorphic to $\bM_2(\bC)$ equipped with the Real involution $a \mapsto u\bar{a}u^*$, where $\bar{a}$ is the usual complex conjugation and $u=\big( \begin{smallmatrix}0 & 1 \\ 1 & 0 \end{smallmatrix} \big) $. 
Then a quaternionic vector bundle $E$ determines an element of $\KR_0(C(M) \otimes \bH )$ and the sequence $(\pi_{a,E})_{N \in \bN}$ gives rise to a Real $\ast$-homomorphism $\pi^\flat \colon C^*\Pi \to \cQ \otimes \bH$. Hence the same proofs of Theorem \ref{thm:ACM} and Lemma \ref{lem:HS} work only by replacing $\cQ$ with $\cQ \otimes \bH$. 
\end{rmk}

\bibliographystyle{alpha}
\bibliography{ref.bib}

\end{document}